\newcommand{\li}{\liminf_{n \rightarrow \infty}}
\newcommand{\limn}{\lim_{n \rightarrow \infty}}
\newcommand{\si}{\sum_{i=0}^\infty}
\newcommand{\ls}{\limsup_{n \rightarrow \infty}}
\newcommand{\sign}{{\sf sign}}
\newcommand{\abs}[1]{\ensuremath{\lvert #1\rvert}}
\newcommand{\tuple}[1]{\langle #1 \rangle}
\newcommand{\weight}{w}
\newcommand{\nat}{\mathbb N}
\newcommand{\rat}{{\mathbb Q}}
\newcommand{\real}{{\mathbb R}}
\newcommand{\pat}{\omega}
\newcommand{\Pat}{\Omega}
\newcommand{\straa}{\sigma}
\newcommand{\Straa}{\Sigma}
\newcommand{\strab}{\pi}
\newcommand{\Strab}{\Pi}
\newcommand{\val}{val}
\newcommand{\MeanPayoff}{{\sf MeanPayoff}}
\newcommand{\DiscSum}{{\sf DiscSum}}
\newcommand{\WAvg}{{\sf WeightedAvg}}
\begin{document}

\pagestyle{plain}

\title{On Memoryless Quantitative Objectives}

\author{Krishnendu Chatterjee$^1$ \and Laurent Doyen$^2$ \and 
  Rohit Singh$^3$}

\institute{ Institute of Science and Technology(IST) Austria \and LSV, ENS Cachan \& CNRS, France
\and Indian Institute of Technology(IIT) Bombay
}

\maketitle

\begin{abstract}
In two-player games on graph, the players construct an infinite path through
the game graph and get a reward computed by a payoff function over infinite paths.
Over weighted graphs, the typical and most studied payoff functions compute the limit-average 
or the discounted sum of the rewards along the path. 
Beside their simple definition, these two payoff functions enjoy the property that memoryless
optimal strategies always exist.

In an attempt to construct other simple payoff functions, we define a class 
of payoff functions which compute an (infinite) weighted average of the rewards.
This new class contains both the limit-average and discounted sum functions,
and we show that they are the only members of this class which 
induce memoryless optimal strategies, showing that there is essentially no other 
simple payoff functions.
\end{abstract}

\section{Introduction}

Two-player games on graphs have many applications in computer science, 
such as the synthesis problem~\cite{Church62}, and the model-checking of
open reactive systems~\cite{AHK02}. Games are also fundamental in logics,
topology, and automata theory~\cite{Kechris,GradelThoWil02,Martin75}. 
Games with quantitative objectives have been used to design resource-constrained
systems~\cite{ZP96,dAHM03,CAHS03,CD10a}, and to support quantitative model-checking
and robustness~\cite{CDH10b,CHJS10,RV11}.

In a two-player game on a graph, a token is moved by the players along
the edges of the graph. The set of states is partitioned into 
player-1 states from which player~$1$ moves the token, 
and player-2 states from which player~$2$ moves the token.
The interaction of the two players results in a play, an infinite path 
through the game graph. In qualitative zero-sum games, each play
is winning for one of the player; in quantitative games, a payoff function 
assigns a value to every play, which is paid by player~$2$ to 
player~$1$. Therefore, player~$1$ tries to maximize the payoff while 
player~$2$ tries to minimize it. Typically, the edges of the graph carry a reward,
and the payoff is computed as a function of the infinite sequences of rewards
on the play.

Two payoff functions have received most of the attention in literature:
the \emph{mean-payoff} function (for example, see~\cite{EM79,ZP96,GKK88,LigLip69,FV97,MN81})
and the \emph{discounted-sum} function (for example, see~\cite{Sha53,FV97,Puri95,Puterman,dAHM03}).
The mean-payoff value is the long-run average of the rewards. The discounted sum
is the infinite sum of the rewards under a discount factor $0 < \lambda < 1$. 
For an infinite sequence of rewards $w = w_0 w_1 \dots$, we have:
$$ \MeanPayoff(w) = \liminf_{n \to \infty} \frac{1}{n} \cdot \sum_{i=0}^{n-1} w_i \qquad 
\DiscSum_\lambda(w) = (1-\lambda) \cdot \sum_{i=0}^{\infty} \lambda^i \cdot  w_i$$
While these payoff functions have a simple, intuitive, and mathematically elegant
definition, it is natural to ask why they are playing such a central role in 
the study of quantitative games. One answer is perhaps that \emph{memoryless} optimal
strategies exist for these objectives. A strategy is memoryless if it is 
independent of the history of the play and depends only on the current state.
Related to this property is the fact that the problem of deciding the
winner in such games is in NP $\cap$ coNP, while no polynomial time algorithm
is known for this problem. The situation is similar to the case of parity games
in the setting of qualitative games where it was proved that the parity objective
is the only prefix-independent objective to admit memoryless winning strategies~\cite{CN06},
and the parity condition is known as a canonical way to express $\omega$-regular
languages~\cite{Thomas97}.

In this paper, we prove a similar result in the setting of quantitative games.
We consider a general class of payoff functions which compute an infinite 
weighted average of the rewards. The payoff functions are parameterized by an
infinite sequence of rational coefficients $\{c_n\}_{n\geq 0}$, and defined as follows:
\[
\WAvg(w) = \liminf_{n \to \infty} \frac{\sum_{i=0}^n c_i \cdot w_i}{\sum_{i=0}^n c_i}.
\]
We consider this class of functions for its simple and natural definition, and because
it generalizes both mean-payoff and discounted-sum which can be obtained as special cases, 
namely for $c_i = 1$ for all\footnote{Note that other sequences also define the mean-payoff function, such as $c_i = 1 + 1 / 2^i$.}
 $i \geq 0$, and $c_i = \lambda^i$ respectively.
We study the problem of characterizing which payoff functions in this
class admit memoryless optimal strategies for both players. 
Our results are as follows:
\begin{enumerate}
\item If the series $\sum_{i=0}^{\infty} c_i$ converges (and is finite), then discounted sum 
is the \emph{only} payoff function that admits memoryless optimal strategies for both 
players.

\item If the series $\sum_{i=0}^{\infty} c_i$ does not converge, but the sequence 
$\{c_n\}_{n\geq 0}$ is bounded, then for memoryless optimal strategies the payoff 
function is equivalent to the mean-payoff function (equivalent for the optimal 
value and optimal strategies of both players). 
\end{enumerate}

Thus our results show that the discounted sum and mean-payoff functions, beside their
elegant and intuitive definition, are the only members from a large 
class of natural payoff functions that are simple (both players have memoryless
optimal strategies).
In other words, there is essentially no other simple payoff functions in the class
of weighted infinite average payoff functions. This further establishes 
the canonicity of the mean-payoff and discounted-sum functions, and suggests that they
should play a central role in the emerging theory of quantitative automata and 
languages~\cite{DrosteG07,Henzinger10,Bojanczyk10,CDH10b}.

In the study of games on graphs, characterizing the classes of payoff functions 
that admit memoryless strategies is a research direction that has been investigated
in the works of~\cite{GimZie04,GimZie05} which give general conditions on the payoff 
functions such that both players have memoryless optimal strategies, 
and~\cite{Kop06} which presents similar results when only one player has memoryless
optimal strategies.
The conditions given in these previous works are useful in this paper, in particular
the fact that it is sufficient to check that memoryless strategies are sufficient
in one-player games~\cite{GimZie05}. However, conditions such as sub-mixing and 
selectiveness of the payoff function are not immediate to establish, especially 
when the sum of the coefficients $\{c_n\}_{n\geq 0}$ does not converge. We identify
the necessary condition of boundedness of the coefficients $\{c_n\}_{n\geq 0}$ 
to derive the mean-payoff function.
Our results show that if the sequence is convergent, then discounted sum 
(specified as $\{\lambda^n\}_{n\geq 0}$, for $\lambda<1$) is the only 
memoryless payoff function; and if the sequence is divergent and bounded, then 
mean-payoff (specified as $\{\lambda^n\}_{n\geq 0}$ with $\lambda=1$)
is the only memoryless payoff function.
However we show that if the sequence is divergent and unbounded, then there 
exists a sequence $\{\lambda^n\}_{n\geq 0}$, with $\lambda>1$, that does not
induce memoryless optimal strategies.



\section{Definitions}

\noindent{\bf Game graphs.}
A two-player \emph{game graph} $G=\tuple{Q, E, \weight}$ consists of a finite set $Q$ of states 
partitioned into \mbox{player-$1$} states $Q_1$ and player-2 states $Q_2$ (i.e., $Q=Q_1 \cup Q_2$),
and a set $E \subseteq Q \times Q$ of edges such that for all $q \in Q$,
there exists (at least one) $q' \in Q$ such that $(q,q') \in E$. 
The weight function $\weight:E \to \rat$ assigns a reward to each edge.
For a state $q \in Q$, we write $E(q)=\{r \in Q \mid (q,r) \in E \}$ 
for the set of successor states of~$q$.
A \emph{player-$1$ game} is a game graph where $Q_1 = Q$ and $Q_2 = \emptyset$.
Player-$2$ games are defined analogously.


\smallskip\noindent{\bf Plays and strategies.}
A game on $G$ starting from a state $q_0 \in Q$ is played in rounds as follows. 
If the game is in a player-1 state, then player~$1$ chooses the successor state from the set
of outgoing edges; otherwise the game is in a player-$2$ state, and player $2$ chooses the successor 
state. 
The game results in a \emph{play} from~$q_0$, i.e., 
an infinite path $\rho = \tuple{q_0 q_1 \dots}$ such that $(q_i,q_{i+1}) \in E$ for all $i \geq 0$. 
We write $\Pat$ for the set of all plays.
The prefix of length $n$ of $\rho$ is denoted by $\rho(n) = q_0 \dots q_n$. 
A strategy for a player is a recipe that specifies how to extend plays.
Formally, a \emph{strategy} for player~$1$ is a function
$\straa: Q^* Q_1 \to Q$ such that $(q,\straa(\rho\cdot q)) \in E$ for all $\rho \in Q^*$ 
and $q \in Q_1$. The strategies for player~2 are defined analogously.
We write $\Straa$ and $\Strab$ for the sets of all strategies for 
player~1 and player~2, respectively.

An important special class of strategies are \emph{memoryless} strategies
which do not depend on the history of a play, but only on the current state. 
Each memoryless strategy for player~1 can be specified as a function
$\straa$: $Q_1 \to Q$ such that $\straa(q) \in E(q)$ for all $q \in Q_1$,
and analogously for memoryless player~2 strategies.

Given a starting state $q \in Q$, the \emph{outcome} of strategies $\straa \in \Straa$ for player~1, 
and $\strab \in \Strab$ for player~2, is the play 
$\pat(s,\straa,\strab) = \tuple{q_0 q_1 \dots} $ such that : 
$q_0=q$ and for all $k \geq 0$,
if $q_k \in Q_1$, then $\straa(q_0,q_1,\ldots,q_k)=q_{k+1}$, and
if $q_k \in Q_2$, then $\strab(q_0,q_1,\ldots,q_k)=q_{k+1}$.

\smallskip\noindent{\bf Payoff functions, optimal strategies.}
The objective of player~$1$ is to construct a play that maximizes a \emph{payoff function}
$\phi: \Pat \to \real \cup \{-\infty, +\infty\}$ which is a measurable function 
that assigns to every value a real-valued payoff. 
The value for player~$1$ is the maximal payoff that can be achieved 
against all strategies of the other player.
Formally the value for player~1 for a starting state $q$ is defined as 
\[
\val_1(\phi) =  \sup_{\straa \in \Straa} \inf_{\strab \in \Strab} \phi(\pat(q,\straa,\strab)).
\]
A strategy $\straa^*$ is \emph{optimal} for player~1 from $q$ if the strategy 
achieves at least the value of the game against all strategies for player~2, 
i.e.,
\[
\inf_{\strab \in \Strab} \phi(\pat(q,\straa^*,\strab)) = \val_1(\phi).
\]
The values and optimal strategies for player~2 are defined analogously.

%


The mean-payoff and discounted-sum functions are examples of payoff functions
that are well studied, probably because they are simple in the sense that 
they induce memoryless optimal strategies and that this property yields conceptually
simple fixpoint algorithms for game solving~\cite{Sha53,EM79,ZP96,FV97}. 
In an attempt to construct other simple payoff functions, we define the class 
of \emph{weighted average payoffs} which compute (infinite) weighted averages of the rewards,
and we ask which payoff functions in this class induce memoryless optimal strategies.

We say that a sequence $\{c_n\}_{n \geq 0}$ of rational
numbers has \emph{no zero partial sum} if $\sum_{i=0}^n c_i \neq 0$ for all $n \geq 0$. 
Given a sequence $\{c_n\}_{n \geq 0}$ with no zero partial sum, the \emph{weighted average payoff function} 
for a play $\tuple{q_0 q_1 q_2 \ldots}$ is
\[
\phi\left(q_0 q_1 q_2  \dots  \right) = \li \frac{ \sum_{i=0}^n c_i \cdot w(q_i,q_{i+1}) }{\sum_{i=0}^n c_i}.
\]

Note that we use $\li$ in this definition because the plain limit may not exist
in general. The behavior of the weighted average payoff functions
crucially depends on whether the series $S = \sum_{i=0}^{\infty} c_i$ converges or not.
In particular, the plain limit exists if $S$ converges (and is finite). Accordingly,
we consider the cases of converging and diverging sum of weights to characterize  
the class of weighted average payoff functions that admit memoryless optimal strategies for both players.
Note that the case where $c_i = 1$ for all $i \geq 0$ gives the mean-payoff
function (and $S$ diverges), and the case $c_i = \lambda^i$ for $0 < \lambda < 1$ gives the discounted sum
with discount factor $\lambda$ (and $S$ converges). 
All our results hold if we consider $\ls$ instead of $\li$ in the definition of 
weighted average objectives.

In the sequel, we consider payoff functions $\phi: \rat^\omega \to \real$ 
with the implicit assumption that the value of a play $q_0 q_1 q_2 \dots \in Q^\omega$
according to $\phi$ is $\phi(\weight(q_0,q_1) \weight(q_1,q_2) \dots)$
since the sequence of rewards determines the payoff value.


We recall the following useful necessary condition for memoryless
optimal strategies to exist~\cite{GimZie05}. A payoff function $\phi$
is \emph{monotone} if whenever there exists a finite sequence of
rewards $x \in \rat^*$ and two sequences $u,v \in \rat^\omega$ 
such that $\phi(xu) \leq \phi(xv)$, then $\phi(yu) \leq \phi(yv)$
for all finite sequence of rewards $y \in \rat^*$.

\begin{lemma}[\cite{GimZie05}] \label{prefc}
If the payoff function $\phi$ induces memoryless optimal strategy for all two-player game graphs, 
then $\phi$ is monotone.
\end{lemma}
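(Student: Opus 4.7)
The plan is to proceed by contrapositive: I will show that if $\phi$ is not monotone then some two-player game graph admits no memoryless optimal strategy. Unfolding the definition, non-monotonicity yields witnesses $x,y \in \rat^*$ and $u,v \in \rat^\omega$ such that $\phi(xu) \leq \phi(xv)$ but $\phi(yu) > \phi(yv)$. The intuition is that ``after prefix $x$'' player~1 (weakly) prefers continuation $v$, while ``after prefix $y$'' player~1 strictly prefers continuation $u$; I want to encode this contradictory preference into a game where player~2 chooses the prefix and player~1 chooses the continuation, so that a single memoryless commitment at the junction state cannot be simultaneously right for both prefixes.

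Concretely, I would build a game $G$ whose initial state $q_0 \in Q_2$ has two successors $p_x, p_y$. From $p_x$ I attach a deterministic chain whose edge weights spell out $x$; from $p_y$ a chain spelling out $y$. Both chains terminate at a common state $c \in Q_1$, from which player~1 chooses between two subgraphs that produce the infinite reward sequences $u$ and $v$ respectively. To fit inside the usual finite-graph framework I would first refine the witnesses so that $u$ and $v$ are ultimately periodic and each continuation subgraph is a simple cycle; the refinement is standard since one may restrict attention to reward sequences that arise from finite game graphs. I would then compare the value that player~2 (the minimiser) enforces against the two memoryless strategies of player~1 and against the natural history-dependent strategy ``play $v$ after $x$, play $u$ after $y$'', namely
\[
\min(\phi(xu),\phi(yu)),\qquad \min(\phi(xv),\phi(yv)),\qquad \min(\phi(xv),\phi(yu))
\]
respectively. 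From $\phi(xu) \leq \phi(xv)$ and $\phi(yv) < \phi(yu)$, the third quantity weakly dominates the other two, which is the source of the contradiction with memoryless optimality.

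The main obstacle I anticipate is the degenerate case in which this dominance is only weak, for instance when $\phi(xu) = \phi(xv)$ or when one side of a $\min$ collapses, so that a memoryless strategy still achieves the history-dependent value. To turn weak dominance into strict dominance I would either (i) take a product of two copies of the gadget, so that a single memoryless commitment at $c$ must simultaneously cover two independent prefix choices by player~2 and any equality is amplified into a strict gap; or (ii) slightly perturb the weights along the prefix chains, which preserves the direction of both inequalities $\phi(xu) \leq \phi(xv)$ and $\phi(yu) > \phi(yv)$ while breaking the equality. Either patch yields a two-player game graph in which no memoryless strategy of player~1 is optimal, contradicting the hypothesis and thereby establishing that $\phi$ must be monotone.
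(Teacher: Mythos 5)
The paper does not prove this lemma; it imports it from \cite{GimZie05}, so there is no internal proof to compare against. Your contrapositive gadget --- player~2 choosing between the prefixes $x$ and $y$ at a $Q_2$-state, both chains merging into a junction $c \in Q_1$ where player~1 chooses between the continuations $u$ and $v$ --- is indeed the natural construction underlying the cited result, and your computation that the history-dependent strategy secures $\min(\phi(xv),\phi(yu))$, which only \emph{weakly} dominates the two memoryless values $\min(\phi(xu),\phi(yu))$ and $\min(\phi(xv),\phi(yv))$, is correct. However, the two difficulties you mention in passing are genuine gaps, and neither of your proposed patches closes the second one.

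First, the witnesses $u,v$ of non-monotonicity are arbitrary elements of $\rat^\omega$; to realize them as the unique continuations available at $c$ in a finite game graph they must be ultimately periodic, and the assertion that one ``may restrict attention to reward sequences that arise from finite game graphs'' is circular here --- the definition of monotone quantifies over all infinite sequences, and nothing a priori rules out that $\phi$ violates monotonicity only on non-regular witnesses. Second, and more seriously, the degenerate case is not a removable corner case: if $\phi(yv) < \phi(yu) \le \phi(xu) \le \phi(xv)$ (a perfectly legitimate instance of non-monotonicity), then the memoryless strategy ``always play $u$'' attains $\min(\phi(xu),\phi(yu)) = \phi(yu) = \min(\phi(xv),\phi(yu))$, i.e.\ the value of the game, so your gadget admits a memoryless optimal strategy and yields no contradiction. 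Your patch (ii) presumes that perturbing the prefix weights moves $\phi(x'u),\phi(x'v),\phi(y'u),\phi(y'v)$ continuously and preserves both inequalities, but no such continuity is available for an abstract payoff function --- it is precisely the kind of structural information the lemma is meant to deliver, not assume. Your patch (i) is not developed enough to evaluate: if the two copies of the gadget have distinct junction states, a memoryless strategy may answer them differently and thereby reproduce the history-dependent behaviour; if they share the junction state, the game is unchanged. Closing this case requires a further idea (this is where the actual argument of \cite{GimZie05}, phrased there for preference relations, does real work), so the proposal as written does not constitute a proof.
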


\section{Weighted Average with Converging Sum of Weights}\label{sec:converging}


%
%
The main result of this section is that for converging sum of weights 
(i.e., if $\lim_{n \to \infty} \sum_{i=0}^{n} c_i = c^* \in \real$),
the only weighted average payoff function that induce memoryless optimal strategies
is the discounted sum.

\begin{theorem}\label{thm:disc}
Let $(c_n)_{n \in \nat}$ be a sequence of real numbers with no zero partial sum such that $\sum_{i=0}^\infty c_i = c^* \in \real$. 
The weighted average payoff function defined by $(c_n)_{n \in \nat}$ induces optimal memoryless 
strategies for all $2$-player game graphs if and only if
there exists $0 \leq \lambda < 1$ such that $c_{i+1} = \lambda \cdot c_i$ for all $i \geq 0$.
\end{theorem}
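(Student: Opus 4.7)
The plan is to prove the two directions separately.

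For the ``if'' direction, suppose $c_{i+1} = \lambda c_i$ with $0 < \lambda < 1$. Then $c_i = c_0 \lambda^i$ and $c^* = c_0/(1-\lambda)$, so the weighted-average payoff coincides with $(1-\lambda)\sum_{i\ge 0}\lambda^i w_i = \DiscSum_\lambda(w)$, which admits memoryless optimal strategies for both players by Shapley's classical contraction argument. The boundary case $\lambda = 0$ (payoff equals $w_0$) is handled directly.

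For the ``only if'' direction, assume $\phi$ induces memoryless optimal strategies, so by Lemma~\ref{prefc} it is monotone. I would introduce the shifted tail functional $L_n(u) := \sum_{i \ge 0} c_{i+n} u_i$; for eventually-zero rational $u$ this is a finite sum, and the fully general bounded case will follow a posteriori from the same-sign property of the $c_i$'s derived below. A direct calculation gives $\phi(xu) = (A(x) + L_{|x|}(u))/c^*$ with $A(x) = \sum_{i<|x|} c_i x_i$; replacing every $c_i$ by $-c_i$ leaves $\phi$ unchanged, so one may assume $c^* > 0$. Monotonicity then reads
\[
L_n(u) \le L_n(v) \iff L_m(u) \le L_m(v) \qquad \forall n,m \ge 0,\ \forall u,v,
\]
i.e., all shifted functionals induce the same order on rational sequences.

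The crux is to promote ``same order'' to positive proportionality $L_m = \alpha_m L_0$. First, $L_0 \not\equiv 0$: the sequence $u=(1,0,0,\dots)$ satisfies $L_0(u) = c_0 \ne 0$ by the no-zero-partial-sum assumption at $n=0$. Hence each $L_m \not\equiv 0$ too, else its (trivial) induced order would disagree with that of $L_0$. Pick a rational $u^*$ with $L_0(u^*) > 0$. For an arbitrary rational $u$ and every $r \in \rat$, the order-equivalence applied to $u - r u^*$ versus the zero sequence gives
\[
L_0(u) \le r\,L_0(u^*) \iff L_m(u) \le r\,L_m(u^*).
\]
Since this holds for every $r \in \rat$, density of $\rat$ in $\real$ forces $L_m(u^*) > 0$ and $L_m(u)/L_m(u^*) = L_0(u)/L_0(u^*)$. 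Thus $L_m = \alpha_m L_0$ with $\alpha_m := L_m(u^*)/L_0(u^*) > 0$.

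Evaluating $L_m = \alpha_m L_0$ on the indicator sequence at position $k$ gives $c_{k+m} = \alpha_m c_k$ for every $k, m \ge 0$; with $m = 1$ this reads $c_{k+1} = \lambda c_k$ with $\lambda := \alpha_1 > 0$, so $c_k = c_0 \lambda^k$, and convergence of $\sum c_k = c_0/(1-\lambda)$ forces $\lambda < 1$. The main obstacle is the proportionality step: because $L_0(u)$ is generally irrational even for rational $u$, one cannot simply exhibit a rational kernel element of $L_0$ and use it to pin down the scaling factor; the rational-sweeping trick via $u - r u^*$ replaces the usual kernel argument.
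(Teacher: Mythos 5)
Your ``only if'' argument is sound in its main thrust but takes a genuinely different route from the paper's. The paper first forces $c_k\geq 0$ for all $k$ (Lemma~\ref{lem:ckdk} and Corollary~\ref{coro:positive}, via the games $G_1$ and $G_2$ and the monotonicity of Lemma~\ref{prefc}), then splits $c^*$ into the even and odd subsums $S_0,S_1$ and uses the two-state game $G_4$ (Lemma~\ref{lem:s0s1}) to convert the single global inequality $\alpha S_0+\beta S_1\leq\gamma(S_0+S_1)$ into the local inequalities $(\gamma-\alpha)c_i\geq(\beta-\gamma)c_{i+1}$, finally squeezing $\lambda=S_1/S_0$ by rational approximation from both sides. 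You instead work with the linear tail functionals $L_n$ on eventually-zero sequences and use only monotonicity to show that all the $L_n$ induce the same order, hence are positively proportional; the geometric recursion then falls out by evaluating on indicator sequences. Your rational-sweeping step (comparing $u-ru^*$ with the zero sequence for all $r\in\rat$) plays the same role as the paper's two-sided rational approximation of $S_1/S_0$, but it bypasses the even/odd decomposition, the positivity of the $c_i$, and the game $G_4$ altogether. I checked the order-transfer, the derivation of $L_m(u^*)>0$, and the threshold-matching via density of $\rat$; these are correct. This is a cleaner derivation, \emph{conditional on the point below}.

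The genuine gap is that you never rule out $c^*=0$. The hypotheses only say the partial sums are nonzero and convergent; they may converge to $0$ (e.g.\ $c=(1,-\tfrac12,-\tfrac14,\dots)$ has partial sums $1,\tfrac12,\tfrac14,\dots$). In that case your identity $\phi(xu)=(A(x)+L_{|x|}(u))/c^*$ is meaningless --- the payoffs of eventually-zero plays are then $\pm\infty$ --- and the entire order-transfer argument collapses at its first step. Sign-flipping the $c_i$ lets you normalize to $c^*\geq 0$, not to $c^*>0$. The paper closes exactly this hole with Lemma~\ref{lem:0lL1}: in $G_1$ the plays $10^\omega$ and $01^\omega$ must have payoffs between the two memoryless values $0$ and $1$, forcing $0\leq\liminf_n 1/d_n\leq\limsup_n 1/d_n\leq 1$ and hence $\abs{c^*}\geq 1$. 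You need this (or an equivalent game-graph argument) before writing $\phi$ as a normalized infinite sum; it is a short fix but it cannot be omitted. A secondary observation: your argument yields $\alpha_1>0$, i.e.\ $\lambda>0$, which is strictly stronger than the stated conclusion and in fact excludes the case $\lambda=0$ that the theorem's ``if'' direction admits (the payoff $w\mapsto w_0$). That tension traces to the literal statement of monotonicity in Lemma~\ref{prefc} (which $w\mapsto w_0$ violates) rather than to your reasoning, but the paper's route is more robust here since it never needs $c_{i+1}\neq 0$; you should at least flag that you are deriving $\lambda>0$ rather than $\lambda\geq 0$.
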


To prove Theorem~\ref{thm:disc}, we first use its assumptions to obtain necessary conditions for
the weighted average payoff function defined by $(c_n)_{n \in \nat}$ to 
induce optimal memoryless strategies. By \emph{assumptions of Theorem~\ref{thm:disc}},
we refer to the fact that $(c_n)_{n \in \nat}$ is a sequence of real numbers with 
no zero partial sum such that $\sum_{i=0}^\infty c_i = c^* \in \real$, and that it defines 
a weighted average payoff function that induces optimal memoryless 
strategies for all $2$-player game graphs. All lemmas of this section use the
the assumptions of Theorem~\ref{thm:disc}, but we generally omit to mention them.

Let $d_n = \sum_{i=0}^{n-1} c_i$, $l = \li \frac{1}{d_n}$ and $L = \ls \frac{1}{d_n}$.
The assumption that $\sum_{i=0}^\infty c_i = c^* \in \real$ implies that $l \neq 0$.

Note that $c_0 \neq 0$ since $(c_n)_{n \in \nat}$ is a sequence with no zero partial sum. 
We can define the sequence $c_n' = \frac{c_n}{c_0}$ which defines the same payoff 
function $\phi$. Therefore we assume without loss of generality that $c_0 = 1$.

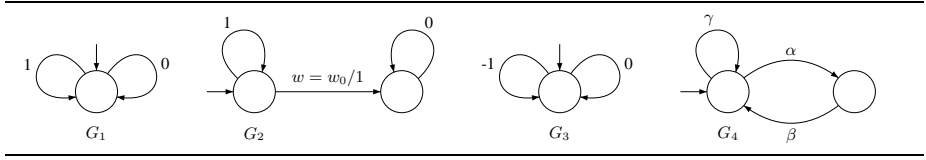
\begin{figure}[t]
\hrule
\begin{center}

\scalebox{0.7}{\begin{tabular}{c c c c}

\begin{picture}(32,25)(0,0)
\node[Nmarks=i, iangle=90](n0)(15,10){}
\drawloop[loopCW=n, ELside=r, loopangle=158](n0){1}
\drawloop[loopangle=22](n0){0}
\node[Nframe=n,NLangle=0,Nmr=0](n2)(15,2){$G_1$}
\end{picture}

& 

\begin{picture}(52,25)(0,0)

\node[Nmarks=i](n0)(12,10){}
\node(n1)(40,10){}

\drawloop[ELpos=45, loopangle=105](n0){1}
\drawloop[ELpos=45, loopCW=n, ELside=r, loopangle=75](n1){0}

\node[Nframe=n](n2)(12,2){$G_2$}
\drawedge(n0,n1){$w = w_0/1$}

\end{picture}

& 

\begin{picture}(36,25)(0,0)
\node[Nmarks=i, iangle=90](n0)(17,10){}
\drawloop[loopCW=n, ELside=r, loopangle=158](n0){-1}
\drawloop[loopangle=22](n0){0}
\node[Nframe=n,NLangle=0,Nmr=0](n2)(17,2){$G_3$}
\end{picture}

&

\begin{picture}(46,27)(0,0)

\node[Nmarks=i](n3)(12,10){}
\node(n4)(36,10){}

\drawloop[loopangle=105](n3){$\gamma$}
\drawedge[curvedepth=6](n3,n4){$\alpha$}
\drawedge[curvedepth=6](n4,n3){$\beta$}
\node[Nframe=n](n2)(12,2){$G_4$}
\end{picture}
\end{tabular}
}
\end{center}
\vspace{-3mm}
\hrule
\caption{Examples of one-player game graphs.} 
\label{fig:g1}
\end{figure}

\begin{lemma}\label{lem:0lL1}
If the weighted average payoff function defined by $(c_n)_{n \in \nat}$ induces optimal memoryless 
strategies for all $2$-player game graphs, then $0 \leq l \leq L \leq 1$.
\end{lemma}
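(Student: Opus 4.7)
The plan is to extract the four bounds $l \geq 0$, $l \leq 1$, $L \geq 0$, $L \leq 1$ from two very small one-player games built on the graph $G_1$ of Figure~\ref{fig:g1}, once as a player-$1$ (max) game and once as a player-$2$ (min) game. After the harmless normalization $c_0 = 1$ (which leaves $\phi$ unchanged and gives $d_1 = 1$), these four inequalities together with the trivial $l \leq L$ yield the claim.

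First I would analyze $G_1$ as a player-$1$ game. Its only memoryless strategies take the $1$-loop or the $0$-loop forever, yielding payoffs $1$ and $0$ respectively, so the memoryless optimal value is $1$; by the hypothesis of Theorem~\ref{thm:disc} this coincides with the true value of the game. Hence every play $w \in \{0,1\}^\omega$ satisfies $\phi(w) \leq 1$. I would then compute $\phi$ on the two specific plays $1 \cdot 0^\omega$ and $0 \cdot 1^\omega$:
\[
\phi(1\cdot 0^\omega) \;=\; \liminf_n \frac{c_0}{d_{n+1}} \;=\; l,
\qquad
\phi(0 \cdot 1^\omega) \;=\; \liminf_n \frac{d_{n+1}-1}{d_{n+1}} \;=\; 1 - L,
\]
where the second equality uses the standard identity $\liminf_n (1 - x_n) = 1 - \limsup_n x_n$. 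The constraint $\phi(w) \leq 1$ therefore gives $l \leq 1$ and $L \geq 0$.

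Next I would repeat the analysis on the same graph $G_1$ but with its single state reassigned to player~$2$. Now the player minimizes, the two memoryless strategies still produce payoffs $0$ and $1$, so the memoryless optimal value is $0$ and every play satisfies $\phi(w) \geq 0$. Applying this constraint to the same two plays gives the complementary inequalities $l \geq 0$ and $L \leq 1$. Combined with $l \leq L$ this establishes $0 \leq l \leq L \leq 1$.

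The main step to notice is that neither the max nor the min version of $G_1$ alone produces all four bounds: the max version yields $l \leq 1$ and $L \geq 0$, while the min version supplies the complementary $l \geq 0$ and $L \leq 1$, so both perspectives on the same tiny graph are needed. Apart from that and the routine $\liminf$/$\limsup$ identity, the proof is elementary; notably, the convergence assumption $\sum c_i = c^* \in \real$ of Theorem~\ref{thm:disc} is not required for this lemma.
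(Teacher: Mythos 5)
Your proof is correct and follows essentially the same route as the paper: the same graph $G_1$, the same two plays $1\cdot 0^\omega$ and $0\cdot 1^\omega$ evaluating to $l$ and $1-L$, and the observation that these payoffs must lie between the two memoryless values $0$ and $1$. You merely make explicit the dual maximizer/minimizer reading of $G_1$ that the paper compresses into the phrase ``all such payoffs must be between the payoffs obtained by the only two memoryless strategies,'' and your side remark that the convergence of $\sum_i c_i$ is not needed here is accurate.
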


\begin{proof}
Consider the one-player game graph $G_1$ shown in \figurename~\ref{fig:g1}. 
In one-player games, strategies correspond to paths. 
The two memoryless strategies give the paths $0^\omega$ and $1^\omega$ with payoff 
value~$0$ and~$1$ respectively. 
The strategy which takes edge with reward $1$ once, and then always the edge
with reward $0$ gets payoff $\phi\left(10^\omega\right) = \li \frac{1}{d_n} = l$. 
Similarly, the path $01^\omega$ has payoff $\phi \left(01^\omega\right)$ = 
$\li \left(1-\frac{1}{d_n} \right) = 1 - \ls \frac{1}{d_n} = 1 - L$. 
Since all such payoffs must be 
between the payoffs obtained by the only two memoryless strategies, we have
$l \geq 0$ and $L \leq 1$, and the result follows ($L \geq l$ follows
from their definition).
\qed
\end{proof}

%
%
%
%

\begin{lemma}\label{lem:ckdk}
There exists $w_0 \in \nat$ such that $w_0 l > 1$
and the following inequalities hold, for all $k \geq 0$:
$c_k l \leq  1 - d_k L$ and $c_k w_0 l \geq 1 - d_k L$.
\end{lemma}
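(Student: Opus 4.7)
The plan is to instantiate the one-player game $G_2$ of Figure~\ref{fig:g1} twice---once with the middle-edge weight $w=1$, regarded as a player-$1$ (maximizer) graph, and once with weight $w=w_0$, regarded as a player-$2$ (minimizer) graph---and to read off each inequality from the fact that, under the assumptions of Theorem~\ref{thm:disc}, the game value in each case must coincide with the best memoryless value.

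First I select $w_0$. The hypothesis $\sum_{i=0}^\infty c_i = c^* \in \real$ together with $l \neq 0$ and $l \geq 0$ (Lemma~\ref{lem:0lL1}) gives $l > 0$; moreover $d_n \to c^*$ forces $l = L = 1/c^*$. Hence any integer $w_0 > 1/l$ satisfies $w_0 l > 1$, and I fix such a $w_0$.

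For $c_k l \leq 1 - d_k L$, take $G_2$ with $w=1$ as a player-$1$ graph. The two memoryless plays are $1^\omega$ (payoff $1$) and $1\cdot 0^\omega$ (payoff $l \leq 1$), so the memoryless---and hence by assumption overall---optimal value is $1$. The non-memoryless strategy that self-loops at $n_0$ for $k+1$ steps before crossing produces reward sequence $1^{k+1}0^\omega$: the numerator $\sum_{i=0}^n c_i\,w_i$ equals the constant $d_{k+1}$ for every $n \geq k+1$, so (because $d_n \to c^*>0$) both $\liminf$ and $\limsup$ of $d_{k+1}/d_n$ equal $d_{k+1}\,l$. Memoryless optimality then yields $d_{k+1}\,l \leq 1$, and after substituting $d_{k+1}=d_k+c_k$ and $l=L$ this is exactly $c_k l \leq 1 - d_k L$.

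The second inequality is symmetric: take $G_2$ with $w=w_0$ as a player-$2$ graph. The memoryless plays now have values $1$ and $w_0 l > 1$, so the optimal (minimum) memoryless value is $1$. The analogous strategy self-looping $k$ times and then crossing produces the reward sequence $1^k\, w_0\, 0^\omega$ with payoff $(d_k+c_k w_0)\,l$ by the same liminf calculation, and memoryless optimality gives $(d_k+c_k w_0)\,l \geq 1$, i.e., $c_k w_0 l \geq 1 - d_k L$. The only delicate point in the whole argument is the liminf computation, but it is benign in the converging regime: an eventually constant numerator $C$ divided by $d_n \to c^*>0$ converges to $C/c^* = C\cdot l$, so no sign subtlety arises and both extremal limits collapse to the same value.
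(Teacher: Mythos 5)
Your proof is correct, but it takes a genuinely different route from the paper's. The paper never actually uses the games $G_2$ as games: it only uses the numerical facts $\phi(1\cdot 0^\omega)=l\leq 1=\phi(1^\omega)$ and $\phi(w_0 0^\omega)=w_0 l>1=\phi(1^\omega)$, and then invokes the monotonicity necessary condition of Lemma~\ref{prefc} to prefix both sides by $0^k$, obtaining $\phi(0^k 1 0^\omega)\leq\phi(0^k 1^\omega)$ and $\phi(0^k w_0 0^\omega)\geq\phi(0^k 1^\omega)$ --- comparisons between abstract reward sequences that are not plays of any single graph. You instead work directly inside $G_2$: the delayed-crossing strategies are legitimate (non-memoryless) strategies of that one-player game, so their payoffs $d_{k+1}l$ and $(d_k+c_k w_0)l$ must be bounded by the memoryless optimum $1$ in the maximizer and minimizer versions respectively, which after substituting $d_{k+1}=d_k+c_k$ and $l=L$ gives exactly the stated inequalities. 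Two features of your argument are worth noting. First, your preliminary observation that $\sum_i c_i=c^*\in\real$ together with $0<l\leq L\leq 1$ forces $c^*\neq 0$ and hence $l=L=1/c^*$ turns every $\liminf$ in sight into a genuine limit; this sidesteps sign subtleties (e.g.\ $\liminf_n c_k/d_n$ equals $c_k l$ only when $c_k\geq 0$ or $l=L$) that the paper's computation of $\phi(0^k 1 0^\omega)=c_k l$ quietly glosses over, since $c_k\geq 0$ is only established \emph{after} this lemma. Second, your argument uses the existence of memoryless optimal strategies in both the player-$1$ and the player-$2$ version of $G_2$, whereas the paper's uses the (stronger-looking, but also a consequence of the same hypothesis) monotonicity property; both are available under the assumptions of Theorem~\ref{thm:disc}, so either route is legitimate.
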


\begin{proof}
Since $1 \geq l > 0$ (by Lemma~\ref{lem:0lL1}), we can choose $w_0 \in \nat$ such that $w_0 l > 1$.
Consider the game graph $G_2$ shown in \figurename~\ref{fig:g1} and the case when $w = 1$. 
The optimal memoryless strategy is to stay on the starting state forever because $\phi(10^\omega) = l \leq \phi(1^w) = 1$.
Using Lemma \ref{prefc}, we conclude that since $\phi(1 0^\omega) \leq \phi(1^\omega)$, we must have 
$\phi(0^k 1 0^\omega) \leq \phi(0^k 1^\omega)$ i.e. $c_k l \leq 1 - \left(\sum_{i=0}^{k-1} c_i\right)L$ 
which implies $c_k l \leq 1 - d_k L$.

Now consider the case when $w = w_0$ in \figurename~\ref{fig:g1}. The optimal memoryless strategy is to choose 
the edge with reward $w_0$ from the starting state since $\phi(w_0 0^\omega) = w_0l > \phi(1^\omega) = 1$.
Using Lemma \ref{prefc}, we conclude that since $\phi(w_0 0^\omega) > \phi(1^\omega)$, we must have 
$\phi(0^k w_0 0^\omega) \geq \phi(0^k 1^\omega)$ i.e. $c_k w_0 l \geq 1 - \left(\sum_{i=0}^{k-1} c_i\right)L$
which implies $c_k w_0 l \geq 1 - d_k L$.
\qed
\end{proof}

From the inequalities in Lemma~\ref{lem:ckdk}, it is easy to see that since $w_0 > 1$ we must have $c_k \geq 0$ for all $k$.

\begin{corollary}\label{coro:positive}
Assuming $c_0 = 1$, we have $c_k \geq 0$ for all $k\geq 0$.
\end{corollary}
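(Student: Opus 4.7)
The plan is to derive the sign of $c_k$ directly by chaining the two inequalities from Lemma~\ref{lem:ckdk}. Writing them together, we have
\[
c_k l \;\leq\; 1 - d_k L \;\leq\; c_k w_0 l,
\]
so $c_k l \leq c_k w_0 l$, which rearranges to $c_k \cdot l \cdot (w_0 - 1) \geq 0$. The conclusion $c_k \geq 0$ will then follow once we know that the factor $l(w_0 - 1)$ is strictly positive.

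First, I would verify that $l > 0$. By Lemma~\ref{lem:0lL1} we already have $l \geq 0$, and the convergence hypothesis $\sum_{i=0}^\infty c_i = c^* \in \real$ forces $l = \li \tfrac{1}{d_n} = \tfrac{1}{c^*} \neq 0$ (this is the observation recorded right after Lemma~\ref{lem:0lL1}). Thus $l > 0$. Next, I would verify $w_0 > 1$. Lemma~\ref{lem:ckdk} chose $w_0 \in \nat$ satisfying $w_0 l > 1$; combined with $l \leq 1$ from Lemma~\ref{lem:0lL1}, this gives $w_0 > 1/l \geq 1$, hence $w_0 - 1 > 0$. Therefore $l(w_0 - 1) > 0$, and dividing the inequality $c_k \cdot l \cdot (w_0 - 1) \geq 0$ by this positive quantity yields $c_k \geq 0$ for every $k \geq 0$.

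There is no real obstacle here: the corollary is a direct algebraic consequence of the two bracketing inequalities in Lemma~\ref{lem:ckdk}, once the strict positivity of $l$ and of $w_0 - 1$ has been recorded. The only point that deserves a brief line of justification is that convergence of the series is used precisely to upgrade $l \geq 0$ from Lemma~\ref{lem:0lL1} to $l > 0$.
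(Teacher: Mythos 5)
Your proof is correct and is exactly the argument the paper intends: it chains the two inequalities of Lemma~\ref{lem:ckdk} to get $c_k\, l\,(w_0-1)\ge 0$ and then divides by the positive factor $l(w_0-1)$, which is precisely what the paper's one-line remark (``since $w_0>1$ we must have $c_k\ge 0$'') leaves implicit. Your explicit justification that $l>0$ (from $l\neq 0$ under the convergence assumption together with $l\ge 0$ from Lemma~\ref{lem:0lL1}) and that $w_0>1$ is a welcome filling-in of details, not a different approach.
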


It follows from Corollary~\ref{coro:positive} that the sequence $(d_n)_{n \geq 0}$ 
is increasing and bounded from above (if $d_n$ was not 
bounded, then there would exist a subsequence $(d_{n_k})$ which diverges, implying that the sequence 
$\{\frac{1}{d_{n_k}}\}$ converges to $0$ in contradiction with the fact that $\li \frac{1}{d_n} = l > 0$). 
Therefore, $d_n$ must converge to some real number say $c^* > 0$ (since $c_0 = 1$). 
We need a last lemma to prove Theorem~\ref{thm:disc}. 
Recall that we have $c_i \geq 0$ for all $i$ and $\sum_{i=0}^\infty c_i = c^* > 0$. 
Given a finite game graph~$G$, let $W$ be the largest reward in absolute value.
For any sequence of rewards $(w_n)$ in a run on $G$, the sequence
$\chi_n = \sum_{i=0}^n c_i (w_i + W)$ is increasing and bounded from above by $2 \cdot W d_n$ and thus 
by $2 \cdot W c^*$. 
Therefore, $\chi_n$ is a convergent sequence and  $\sum_{i=0}^\infty c_i w_i$ converges as well. 
Now, we can write the payoff function as $\phi(w_0w_1 \dots ) = \frac{\sum_{i=0}^\infty c_i w_i}{c^*}$. 
We decompose $c^*$ into $S_0 = \si c_{2i}$ and $S_1 = \si c_{2i+1}$, i.e. $c^* = S_0 + S_1$. 
Note that $S_0$ and $S_1$ are well defined.

\begin{lemma} \label{lem:s0s1}
If there exist numbers $\alpha, \beta, \gamma$ such that $\alpha S_0 + \beta S_1 \leq \gamma (S_0 + S_1)$, 
then $(\gamma - \alpha) c_i \geq (\beta - \gamma) c_{i+1}$ for all $i \geq 0$.
\end{lemma}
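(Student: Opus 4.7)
The plan is to analyze the one-player game graph $G_4$ shown in Figure~\ref{fig:g1}, whose edge weights are precisely the numbers $\alpha,\beta,\gamma$ of the lemma. Call $q_1$ the initial state (the one carrying the self-loop) and $q_2$ the other. At $q_1$ player~1 has exactly two memoryless options: looping forever, which produces the reward sequence $\gamma^\omega$ with payoff $\gamma$; and taking the $\alpha$-edge, which (since $q_2$ has no choice) forces the sequence $(\alpha\beta)^\omega$ with payoff $(\alpha S_0 + \beta S_1)/c^*$. I would first observe that, thanks to $c_i \geq 0$ (Corollary~\ref{coro:positive}) and $c^*<\infty$, the $\liminf$ in the definition of $\phi$ is an honest limit on every play, so these two expressions really are the $\phi$-values.

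The hypothesis $\alpha S_0+\beta S_1 \leq \gamma(S_0+S_1)$ is precisely the statement that the self-loop weakly dominates the alternating strategy in the memoryless comparison, so the maximum memoryless payoff at $q_1$ equals $\gamma$. Since the standing assumption of Theorem~\ref{thm:disc} grants memoryless optimal strategies on every two-player graph --- and one-player graphs are the special case $Q_2=\emptyset$ --- the game value of $G_4$ at $q_1$ equals $\gamma$, and hence \emph{no} strategy of player~1, memoryless or not, can exceed payoff $\gamma$.

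For each $i\geq 0$ I would then exhibit the history-dependent deviation $\sigma_i$ that self-loops $i$ times, traverses $q_1\to q_2\to q_1$ once, and self-loops forever thereafter. Its outcome is the reward sequence $\gamma^i\,\alpha\,\beta\,\gamma^\omega$, and using $\sum_{j=0}^\infty c_j=c^*$ a direct computation gives its payoff as
\[
\gamma \;+\; \frac{c_i(\alpha-\gamma)+c_{i+1}(\beta-\gamma)}{c^*}.
\]
Combining this with the upper bound from the previous paragraph and using $c^*>0$ yields $c_i(\alpha-\gamma)+c_{i+1}(\beta-\gamma)\leq 0$, which is exactly $(\gamma-\alpha)\,c_i \geq (\beta-\gamma)\,c_{i+1}$, as claimed.

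The only real technicality I anticipate is that the paper's weight function maps to $\rat$, while the lemma is stated for arbitrary real $\alpha,\beta,\gamma$. I would handle this via a density/continuity argument: approximate any real triple satisfying the hypothesis by rational triples that satisfy a slightly relaxed version of it, apply the construction above to each such triple, and pass to the limit, relying on the continuous dependence of both sides of the conclusion on $(\alpha,\beta,\gamma)$. This bookkeeping aside, the substance of the proof lies entirely in selecting the graph $G_4$ and the single deviation $\sigma_i$ that perturbs the reward sequence at exactly two positions $i$ and $i+1$.
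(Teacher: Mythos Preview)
Your proposal is correct and follows essentially the same approach as the paper: use the one-player game $G_4$, observe that the hypothesis makes the $\gamma$-loop the best memoryless choice, and then compare against the deviation producing $\gamma^i\alpha\beta\gamma^\omega$ to obtain $\alpha c_i+\beta c_{i+1}\leq \gamma(c_i+c_{i+1})$. Your density argument for real $\alpha,\beta,\gamma$ is harmless but unnecessary in context, since the paper only ever invokes this lemma with rational parameters (indeed, the proof of Theorem~\ref{thm:disc} deliberately approximates $\lambda$ by rationals $l_n/k_n$ and $r_n/s_n$ before applying it).
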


\begin{proof}
Consider the game graph $G_4$ as shown in \figurename~\ref{fig:g1}. 
The condition $\alpha S_0 + \beta S_1 \leq \gamma (S_0 + S_1)$ 
implies that the optimal memoryless strategy is to always choose the edge with reward~$\gamma$. 
This means that $\phi(\gamma^i \alpha \beta \gamma^\omega) \leq \phi(\gamma^\omega)$ hence 
$\alpha c_i + \beta c_{i+1} \leq \gamma (c_i + c_{i+1})$, i.e. $(\gamma - \alpha) c_i \geq (\beta-\gamma) c_{i+1}$ for all $i \geq 0$.
\qed
\end{proof}

We are now ready to prove the main theorem of this section. 

\begin{proof}[of Theorem~\ref{thm:disc}]
First, we show that $S_1 \leq S_0$. By contradiction, assume that $S_1 > S_0$.
Choosing $\alpha = 1$, $\beta = -1$, and $\gamma = 0$ in Lemma~\ref{lem:s0s1}, and since $S_0 - S_1 \leq 0$, 
we get $-c_i \geq -c_{i+1}$ for all $i \geq 0$ which implies $c_n \geq c_0 = 1$ for all $n$,
which contradicts that $\si c_i$ converges to $c^* \in \real$.

Now, we have $S_1 \leq S_0$ and let $\lambda = \frac{S_1}{S_0} \leq 1$. 
Consider a sequence of rational numbers $\frac{l_n}{k_n}$ converging to 
$\lambda$ from the right, i.e., $\frac{l_n}{k_n} \geq \lambda$ for all $n$, and $\lim_{n \to \infty} \frac{l_n}{k_n} = \lambda$. 
Taking $\alpha = 1$, $\beta = k_n + l_n + 1$, and $\gamma = l_n + 1$
in Lemma~\ref{lem:s0s1}, and since the condition $S_0 + (k_n + l_n + 1)S_1 \leq (l_n +1)(S_0 + S_1)$
is equivalent to $k_n S_1 \leq l_n S_0$ which holds since $\frac{l_n}{k_n} \geq \lambda$, 
we obtain $l_n c_i \geq k_n c_{i+1}$ for all $n \geq 0$ and all $i \geq 0$,
that is $c_{i+1} \leq \frac{l_n}{k_n} c_i$ and in the limit for $n \to \infty$,
we get $c_{i+1} \leq \lambda c_i$ for all $i \geq 0$.

Similarly, consider a sequence of rational numbers $\frac{r_n}{s_n}$ converging to 
$\lambda$ from the left.
Taking $\alpha = r_n + s_n + 1$, $\beta = 1$, and $\gamma = s_n + 1$
in Lemma~\ref{lem:s0s1}, and since the condition $(r_n + s_n + 1) S_0 + S_1 \leq (s_n +1) (S_0+ S_1)$
is equivalent to $r_n S_0 \leq s_n S_1$ which holds since $\frac{r_n}{s_n} \leq \lambda$, 
we obtain $r_n c_i \leq s_n c_{i+1}$ for all $n \geq 0$ and all $i \geq 0$,
that is $c_{i+1} \geq \frac{r_n}{s_n} c_i$ and in the limit for $n \to \infty$,
we get $c_{i+1} \geq \lambda c_i$ for all $i \geq 0$.

The two results imply that $c_{i+1} = \lambda c_i$ for all $i \geq 0$ where $0 \leq \lambda < 1$. 
Note that $\lambda \neq 1$ because $\si c_i$ converges. 
\qed
\end{proof}

Since it is known that for $c_i = \lambda^i$, the weighted average payoff function
induces memoryless optimal strategies in all two-player games, Theorem \ref{thm:disc} 
shows that discounted sum is the only memoryless payoff function when the sum
of weights $\si c_{i}$ converges.

\section{Weighted Average with Diverging Sum of Weights}
In this section we consider weighted average objectives such that 
the sum of the weights $\si c_{i}$ is divergent. 
We first consider the case when the sequence $(c_n)_{n \in \nat}$ is bounded and show 
that the mean-payoff function is the only memoryless one.



\subsection{Bounded sequence}\label{sec:bounded}
We are interested in characterizing the class of weighted average objectives 
that are memoryless, under the assumption the sequence $(c_n)$
is \emph{bounded}, i.e., there exists a constant $c$ such that $\abs{c_n} \leq c$
for all $n$. The boundedness assumption is satisfied by the important special case of regular 
sequence of weights which can be produced by a deterministic finite automaton.
We say that a sequence $\{c_n\}$ is \emph{regular} if it is eventually periodic,
i.e. there exist $n_0 \geq 0$ and $p > 0$ such that $c_{n+p} = c_n$ for all $n \geq n_0$.
Recall that we assume the partial sum to be always non-zero,
i.e., $d_n = \sum_{i=0}^{n-1} c_i \neq 0$ for all $n$.
We show the following result.

\begin{theorem}\label{thm:mean}
Let $(c_n)_{n \in \nat}$ be a sequence of real numbers with no zero partial sum such that 
$\sum_{i=0}^\infty \abs{c_i} = \infty$ (the sum is divergent) and there exists a constant $c$ such
that $\abs{c_i} \leq c$ for all $i \geq 0$ (the sequence is bounded). 
The weighted average payoff function $\phi$ defined by $(c_n)_{n \in \nat}$ 
induces optimal memoryless strategies for all $2$-player game graphs if and 
only if $\phi$ coincides with the mean-payoff function over regular words.
\end{theorem}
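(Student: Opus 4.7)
The plan is to prove the two directions of the equivalence separately.

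For the ($\Rightarrow$) direction, assume $\phi$ induces memoryless optimal strategies on every two-player game graph; we must show $\phi(w) = \MeanPayoff(w)$ for every regular word $w = u v^\omega$. The first sub-goal is to establish $\phi(v^\omega) = \MeanPayoff(v)$ for every nonempty finite sequence $v = v_0 v_1 \ldots v_{k-1}$. The idea is analogous to the use of $G_4$ in Section~\ref{sec:converging}: consider a one-player game whose states form a cycle of length $k$ producing the rewards of $v$, with an additional self-loop of variable reward $\gamma$ attached at the initial state. The two memoryless plays give $\phi$-values $\gamma$ and $\phi(v^\omega)$, so memoryless optimality forces the maximum of these to coincide with the game value. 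Instantiating $\gamma$ so that $\gamma > \phi(v^\omega)$ and applying Lemma~\ref{prefc} to mixed plays of the form $\gamma^n v^m \gamma^\omega$ (the Cesaro sum of $v$ enters through long enough repetitions of the cycle block) sandwiches $\phi(v^\omega)$ from above by $\MeanPayoff(v)$; the symmetric case $\gamma < \phi(v^\omega)$ in a minimizer's game, or the analogue with the self-loop as the suboptimal choice, provides the lower bound. The second sub-goal is prefix-invariance on regular words: $\phi(u v^\omega) = \phi(v^\omega)$ for every finite $u$. This follows by augmenting the previous game with a deterministic lead-in path producing $u$ before entering the cycle for $v$, and appealing again to memoryless optimality together with monotonicity.

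For the ($\Leftarrow$) direction, assume $\phi$ coincides with $\MeanPayoff$ on regular words; we must derive memoryless optimality for all two-player game graphs. The natural tool is the Gimbert--Zielonka reduction cited after Lemma~\ref{prefc}: it suffices that memoryless strategies are sufficient for each player separately in one-player games. In a finite one-player game, every memoryless strategy yields an ultimately periodic play, which is regular, so by hypothesis its $\phi$-value equals its mean-payoff value. Since $\MeanPayoff$ admits memoryless optimal strategies in one-player games and the memoryless mean-payoff value equals the maximum (respectively minimum) cycle mean reachable from the start state, it remains to rule out that some non-memoryless strategy achieves a strictly better $\phi$-value. The key is to exploit that $(c_n)$ is bounded (by some $B$) while $\sum |c_n|$ diverges: for any play $\rho$ in a finite graph, one writes the partial numerator $\sum_{i \leq n} c_i \weight(\rho_i,\rho_{i+1})$ as a sum of contributions over the finitely many cycles traversed infinitely often, plus a transient term of bounded total variation, so that dividing by $d_{n+1}$ the transient term washes out and the liminf is a convex combination of the cycle means, bounded above by the maximum reachable cycle mean.

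The main obstacle will be this final step of the ($\Leftarrow$) direction, i.e.\ controlling $\phi$ on a non-regular play by the maximum mean-payoff of a reachable cycle. The decomposition of partial numerators into cycle contributions is easy for $\MeanPayoff$ because all coefficients are equal, but for general bounded $(c_n)$ with divergent partial sums the weights may oscillate, and one must show that coincidence with $\MeanPayoff$ on every regular word is strong enough to force the right asymptotic behavior on the partial sums (essentially a Cesaro-type condition on $(c_n)$) so that the transient term really is negligible and the cycle means are correctly weighted in the limit.
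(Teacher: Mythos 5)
Your high-level plan for the forward direction (first pin down $\phi$ on purely periodic words, then extend to eventually periodic ones) matches the paper's, but the gadget you propose cannot deliver the key identity $\phi(v^\omega)=\MeanPayoff(v)$, and this is a genuine gap. In the divergent-sum case one first shows (from the games $G_1$, $G_3$, and the eventual constancy of $\sign(d_n)$, which uses boundedness of $(c_n)$) that $1/d_n\to 0$, hence $\phi$ is prefix-independent; consequently a play of the form $\gamma^n v^m\gamma^\omega$ has payoff exactly $\gamma$ and carries no information whatsoever about $v$ --- finite excursions into the cycle are washed out, so your self-loop construction (the analogue of $G_4$, which is the right tool only in the \emph{converging} case of Section~\ref{sec:converging}) yields no sandwich. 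What is actually needed is to prove that each residue class $i\bmod k$ receives asymptotic weight exactly $1/k$, i.e.\ $S_{k,i}=\phi\bigl((0^i10^{k-i-1})^\omega\bigr)=1/k$. The paper's device is the rotation game $G(k,i)$: from a single state the player may choose any of the $k$ cyclic shifts of $10^{k-1}$ as a length-$k$ cycle. Monotonicity (Lemma~\ref{prefc}) forces $S_{k,0}=\dots=S_{k,k-1}$, superadditivity of $\liminf$ gives $S_{k,i}\le 1/k$, and the greedy \emph{non-memoryless} strategy that every $k$ steps selects the rotation maximizing the next block achieves at least $\frac{1}{k}-\frac{c}{d_n}\to\frac{1}{k}$; memoryless optimality then pins $S_{k,i}=1/k$ (and dually $T_{k,i}=1/k$ with reward $-1$, which upgrades the $\liminf$ to a limit so that the payoff of $(a_0\dots a_{k-1})^\omega$ splits as a sum of the $a_i/k$). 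This exploitation of a strictly better-informed non-memoryless strategy inside a game whose memoryless strategies are all provably equal is the crux of the proof and is absent from your outline; without it, nothing forces the weighted average of a periodic word to equal its Cesaro mean. Your "second sub-goal" (prefix-invariance) is handled in the paper exactly as you suggest, by combining Lemma~\ref{prefc} with prefix-independence, but the latter must first be established via $\lim 1/d_n=0$.

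For the converse direction you correctly identify that reducing to one-player games and then controlling $\phi$ on non-regular plays is the obstacle; be aware that the paper does not carry this step out either --- its proof of Theorem~\ref{thm:mean} only establishes the forward implication in detail, and the accompanying Remark claims only that the optimal values and optimal \emph{memoryless} strategies coincide with those of mean-payoff (because memoryless strategies generate regular plays). So your flagging of the difficulty is legitimate, but your proposed cycle-decomposition of the partial numerators does not close it: a single traversal of a cycle contributes $\sum_j c_{i_0+j}w_j$, which is proportional to the cycle mean only if the coefficients are locally constant, and coincidence of $\phi$ with $\MeanPayoff$ on regular words gives you control of averages over residue classes, not over the irregularly spaced index windows occupied by cycle traversals in an arbitrary play.
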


\paragraph{Remark.}
From Theorem~\ref{thm:mean}, it follows that all mean-payoff functions~$\phi$ over
bounded sequences that induce optimal memoryless strategies are equivalent
to the mean-payoff function, in the sense that the optimal value and optimal strategies 
for~$\phi$ are the same as for the mean-payoff function. This is because 
memoryless strategies induce a play that is a regular word.
We also point out that it is not necessary that the sequence $(c_n)_{n \geq 0}$ 
consists of a constant value to define the mean-payoff function.
For example, the payoff function defined by the sequence $c_n = 1 + 1/(n+1)^2$ 
also defines the mean-payoff function.
\smallskip

We prove Theorem~\ref{thm:mean} through a sequence of lemmas.
In the following lemma we prove the existence of the limit of the 
sequence $\{\frac{1}{d_n}\}_{n \geq 0}$.

\begin{lemma} \label{lem:sup0}
If $\li \frac{1}{d_n} = 0$, then $\ls \frac{1}{d_n} = 0$.
\end{lemma}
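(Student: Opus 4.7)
My approach is by contradiction. I assume $\ls \frac{1}{d_n} = L > 0$ and exhibit a one-player game graph in which no memoryless strategy of player~$2$ is optimal, contradicting the hypothesis of Theorem~\ref{thm:mean}. Since $\liminf \le \limsup$ and $\li \frac{1}{d_n}=0$ is given, this proves $\ls \frac{1}{d_n}=0$.

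First, from $\ls \frac{1}{d_n} = L > 0$ one extracts a subsequence $(n_k)_{k \ge 0}$ along which $d_{n_k+1} \to 1/L > 0$. Decomposing $c_i$ into its positive and negative parts, let
\[
P_n \;=\; \sum_{\substack{i \le n \\ c_i \ge 0}} c_i, \qquad N_n \;=\; \sum_{\substack{i \le n \\ c_i < 0}} \abs{c_i},
\]
so that $P_n + N_n = \sum_{i=0}^n \abs{c_i}$ and $P_n - N_n = d_{n+1}$. Because $\sum_{i=0}^\infty \abs{c_i} = \infty$ diverges while $d_{n_k+1}$ stays bounded, both $P_{n_k}$ and $N_{n_k}$ tend to $+\infty$ along the subsequence.

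Next, I consider the one-player game graph $G$ consisting of a single state owned by player~$2$ carrying two self-loops of weights $0$ and $1$. The two memoryless strategies of player~$2$ produce the plays $0^\omega$ and $1^\omega$ with payoffs $\phi(0^\omega) = 0$ and $\phi(1^\omega) = 1$, so her memoryless optimal value in $G$ is $0$. I then exhibit the non-memoryless play $w = w_0 w_1 w_2 \cdots$ defined by $w_i = 1$ if $c_i < 0$ and $w_i = 0$ otherwise. Since $\sum_{i=0}^n c_i w_i = -N_n$, one has
\[
\phi(w) \;=\; \li \frac{-N_n}{d_{n+1}}.
\]
Along the subsequence $(n_k)$, the ratio $-N_{n_k}/d_{n_k+1}$ tends to $-\infty$, and therefore $\phi(w) = -\infty < 0$, contradicting that $0$ is player~$2$'s optimal value in $G$.

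The main obstacle is the choice of the contradicting game and of the non-memoryless play; once they are fixed, the computation is essentially routine. A minor subtlety, handled by restricting attention to the subsequence $(n_k)$ on which $d_{n_k+1}$ is positive and bounded, is that $d_n$ may take negative or arbitrarily large positive values at other indices; this does not affect the conclusion because a single subsequence driving the ratio to $-\infty$ already forces $\phi(w) = -\infty$.
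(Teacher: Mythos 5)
Your proof is correct, but it takes a genuinely different route from the paper's. The paper argues directly from $\li \frac{1}{d_n}=0$: this yields a subsequence with $\abs{d_{n_k}}\to\infty$, and feeding the plays $1^{n_k}0^\omega$ (or $(-1)^{n_k}0^\omega$, according to the sign of $d_{n_k}$) into the two-self-loop graphs $G_1$ and $G_3$ gives payoffs $-\abs{d_{n_k}}\cdot L$, all of which must be sandwiched between the two memoryless payoffs; since $\abs{d_{n_k}}$ is unbounded, this forces $L=0$. You instead assume $L>0$, extract a subsequence along which $d_n$ stays bounded, and invoke the divergence of $\sum_{i}\abs{c_i}$ to conclude that the negative part $N_n$ diverges along it; the single play marking exactly the indices with $c_i<0$ then has payoff $-\infty$, which no memoryless strategy of the minimizer can match. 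Your argument uses the hypothesis $\sum_{i}\abs{c_i}=\infty$ of Theorem~\ref{thm:mean} (the paper's proof of this lemma does not need it), but in exchange it produces one explicit play with an infinitely bad payoff rather than a family of plays whose payoffs must be uniformly bounded, and it exposes the structural fact that $L>0$ would force the negative coefficients to have divergent sum. Two small points worth adding: the case $\ls \frac{1}{d_n}=+\infty$ is covered by the same argument (then $d_{n_k+1}\to 0^{+}$, still bounded, and the ratio still tends to $-\infty$), and the contradiction is more precisely that neither memoryless strategy attains the value $\inf_{\strab\in\Strab}\phi=-\infty$, rather than that the optimal value equals $0$.
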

%

\begin{proof}
Since $l = \li \frac{1}{d_n} = 0$, there is a subsequence $\{d_{n_k}\}$ which either diverges to $+\infty$ or $-\infty$.

1. If the subsequence $\{d_{n_k}\}$ diverges to $+\infty$, assume without loss 
of generality that each $d_{n_k} > 0$.
Consider the one-player game graph $G_3$ shown in Figure~\ref{fig:g1}. We consider the run corresponding to 
taking the edge with weight $-1$ for the first $n_k$ steps followed by taking the $0$ edge forever. 
The payoff for this run is given by 
\[
\li \frac{-d_{n_k}}{d_n} = -d_{n_k} \cdot \ls \frac{1}{d_n} = -d_{n_k} \cdot L.
\] 
Since we assume existence of memoryless optimal strategies this payoff should lie between $-1$ 
and $0$.
This implies that $d_{n_k} \cdot L \leq 1$ for all $k$.
Since $L \geq l \geq 0$ and the sequence $d_{n_k}$ is unbounded,
we must have $L=0$.

2. If the subsequence $\{d_{n_k}\}$ diverges to $-\infty$, assume that each $d_{n_k} < 0$.
Consider the one-player game graph $G_1$ shown in Figure \ref{fig:g1}. We consider the run corresponding to 
taking the edge with weight $1$ for the first $n_k$ steps followed by taking the $0$ edge forever. 
The payoff for this run is given by 
\[
\li \frac{d_{n_k}}{d_n} = -\abs{d_{n_k}}\cdot \ls \frac{1}{d_n} = -\abs{d_{n_k}} \cdot L.
\] 
This payoff should lie between $0$ and $1$ (optimal strategies being memoryless),
and this implies $L=0$ as above.
\qed
\end{proof}

Since $\ls d_n = \infty$, 
Lemma \ref{lem:sup0} concludes that the sequence $\{\frac{1}{d_n}\}$ converges to $0$ i.e. 
$\lim_{n\rightarrow \infty} \frac{1}{d_n} = 0$. It also gives us the following 
corollaries which are a simple consequence of the fact that $\li (a_n + b_n) = a + \li b_n$ 
if $a_n$ converges to $a$.

\begin{corollary}\label{cor:preind}
If $l=0$, then the payoff function $\phi$ does not depend upon any finite prefix of the run, i.e.,
$\phi(a_1a_2 \dots a_k u) = \phi(0^ku) = \phi(b_1b_2 \dots b_ku)$ for all $a_i$'s and $b_i$'s.
\end{corollary}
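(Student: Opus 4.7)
The plan is to exploit Lemma~\ref{lem:sup0} directly: under the hypothesis $l=0$, that lemma (together with the observation right after its proof) gives $\lim_{n\to\infty} \frac{1}{d_n} = 0$, and this is all that is needed to make any finite prefix contribute nothing to the payoff in the limit.

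First, I would fix a prefix $a_1 a_2 \dots a_k$ of rewards and split the numerator of the weighted average. For $n \geq k$, writing $w = a_1 \dots a_k u$ with $u = u_0 u_1 \dots$, we have
\[
\sum_{i=0}^{n} c_i w_i \;=\; \underbrace{\sum_{i=0}^{k-1} c_i a_{i+1}}_{=: C_a} \;+\; \sum_{i=k}^{n} c_i u_{i-k},
\]
so that
\[
\frac{\sum_{i=0}^{n} c_i w_i}{d_{n+1}} \;=\; \frac{C_a}{d_{n+1}} \;+\; \frac{\sum_{i=k}^{n} c_i u_{i-k}}{d_{n+1}}.
\]
The key point is that $C_a$ is a fixed real number (depending only on $a_1,\dots,a_k$ and on $c_0,\dots,c_{k-1}$) and hence $C_a / d_{n+1} \to 0$ as $n \to \infty$, because $1/d_n \to 0$ by Lemma~\ref{lem:sup0}.

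Next I would apply the identity $\liminf_n (x_n + y_n) = x + \liminf_n y_n$ whenever $x_n \to x$ (as hinted just before the corollary), taking $x_n = C_a/d_{n+1}$ and $y_n$ equal to the second summand above. This yields
\[
\phi(a_1 \dots a_k u) \;=\; \liminf_{n\to\infty} \frac{\sum_{i=k}^{n} c_i u_{i-k}}{d_{n+1}},
\]
and the right-hand side no longer depends on $a_1,\dots,a_k$. Specializing to $a_1=\dots=a_k=0$ gives $C_a = 0$ and hence $\phi(0^k u)$ equals the same $\liminf$; consequently $\phi(a_1 \dots a_k u) = \phi(0^k u)$, and the chain of equalities in the corollary follows by applying the same argument with $b_1,\dots,b_k$ in place of $a_1,\dots,a_k$.

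There is really no obstacle here beyond being careful that Lemma~\ref{lem:sup0} is applicable (which it is under the stated hypothesis $l=0$) and that the limit identity is used with $\liminf$ rather than a plain limit; the prefix term is a constant over $d_{n+1}$, so it converges in the ordinary sense to $0$, which is exactly the situation the identity handles. The only mild thing to note is that the same reasoning would go through verbatim if we had defined $\phi$ with $\limsup$ instead of $\liminf$, which is consistent with the remark made earlier in the paper.
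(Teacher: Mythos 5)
Your proof is correct and is exactly the argument the paper intends: it splits off the fixed prefix contribution $C_a$, uses $\lim_{n\to\infty} 1/d_n = 0$ (from Lemma~\ref{lem:sup0}) to kill the term $C_a/d_{n+1}$, and invokes the identity $\liminf_n(x_n+y_n)=x+\liminf_n y_n$ for convergent $(x_n)$, which is precisely the fact the paper cites just before stating the corollary.
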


\begin{corollary}\label{cor:cchange}
If $l=0$, then the payoff function $\phi$ does not change by modifying finitely many values in the sequence $\{c_n\}_{n \geq 0}$. 
\end{corollary}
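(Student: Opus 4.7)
The plan is to reduce the statement to a short computation using Lemma~\ref{lem:sup0}. Let $\{c_n'\}_{n\geq 0}$ be a modification of $\{c_n\}$ that agrees with it on indices $i \geq k$. For any reward sequence $w = w_0 w_1 \dots$, set $A = \sum_{i=0}^{k-1}(c_i' - c_i)w_i$, $B = \sum_{i=0}^{k-1}(c_i' - c_i)$ (both constants in $n$), and $t_n = \sum_{i=0}^n c_i w_i$. Then the payoff $\phi'$ defined by the modified sequence satisfies
\[
\phi'(w) \;=\; \liminf_n \frac{A + t_n}{B + d_{n+1}}.
\]

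The hypothesis $l = 0$ together with Lemma~\ref{lem:sup0} yields $\lim_n 1/d_n = 0$, hence $|d_n|\to\infty$. Consequently $\alpha_n := A/(B+d_{n+1}) \to 0$ and $\beta_n := d_{n+1}/(B+d_{n+1}) \to 1$. Setting $a_n := t_n/d_{n+1}$, so that $\phi(w) = \liminf_n a_n$, a direct rearrangement gives $(A+t_n)/(B+d_{n+1}) = \alpha_n + a_n\beta_n$; the sum rule for $\liminf$ (applied with $\alpha_n\to 0$) then reduces $\phi'(w)$ to $\liminf_n a_n\beta_n$.

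The final step is to show $\liminf_n a_n\beta_n = \liminf_n a_n$. Because $\beta_n \to 1 \neq 0$, the sequences $a_n$ and $a_n\beta_n$ have identical sets of accumulation points in $\real\cup\{\pm\infty\}$: any convergent subsequence $a_{n_k} \to L$ forces $a_{n_k}\beta_{n_k}\to L\cdot 1 = L$, and conversely if $a_{n_k}\beta_{n_k}\to L$ then $a_{n_k} = (a_{n_k}\beta_{n_k})/\beta_{n_k} \to L/1 = L$ (this remains valid when $L = \pm\infty$ because $\beta_{n_k}$ is eventually bounded away from $0$). Hence the two liminfs agree and $\phi'(w) = \phi(w)$.

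The main, and essentially the only, subtlety is that $\liminf$ does not commute with multiplication in general; the naive estimate $a_n\beta_n - a_n = a_n(\beta_n - 1) \to 0$ fails when $a_n$ is unbounded, which can indeed happen under the hypotheses of this subsection. The accumulation-point argument just sketched sidesteps this uniformly, whether $\phi(w)$ is finite or infinite, and makes the corollary the clean consequence of Lemma~\ref{lem:sup0} and the additive sum rule for $\liminf$ that the paper advertises.
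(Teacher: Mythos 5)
Your proof is correct, and it is in fact more complete than what the paper offers: the paper dismisses this corollary (together with Corollary~\ref{cor:preind}) as ``a simple consequence of the fact that $\li (a_n+b_n)=a+\li b_n$ if $a_n$ converges to $a$,'' but that additive rule alone only covers a perturbation of the \emph{numerator} (which is all that Corollary~\ref{cor:preind} needs). Changing finitely many $c_i$ also perturbs the \emph{denominator}, which is exactly the multiplicative factor $\beta_n=d_{n+1}/(B+d_{n+1})\to 1$ in your decomposition $(A+t_n)/(B+d_{n+1})=\alpha_n+a_n\beta_n$. Your accumulation-point argument for $\liminf_n a_n\beta_n=\liminf_n a_n$ (valid because $\beta_n\to 1$ is eventually bounded away from $0$, so convergent subsequences, including those tending to $\pm\infty$, are preserved in both directions) is the right way to handle this without assuming $a_n$ bounded, and everything upstream --- $c_0'\neq 0$ aside, $\lvert d_n\rvert\to\infty$ from Lemma~\ref{lem:sup0}, hence $\alpha_n\to 0$, then the additive $\liminf$ rule --- matches the paper's intended route. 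One pedantic point you may wish to note: for $\phi'$ to be a legitimate weighted average payoff one should also check that the modified sequence still has no zero partial sum, but this does not affect the identity of the two $\liminf$'s, which is all the corollary asserts.
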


By Corollary \ref{cor:preind}, we have $\phi(xa^\omega) = a$ for all $a \in \real$.
For $0 \leq i \leq k-1$, consider the payoff $S_{k,i} = \phi\left( (0^i10^{k-i-1})^\omega \right)$
for the infinite repetition of the finite sequence of $k$ rewards in which all
rewards are $0$ except the $(i+1)$th which is $1$.
We show that $S_{k,i}$ is independent of $i$.

\begin{lemma}\label{lemm:sk0}
We have  $S_{k,0} = S_{k,1} =  \dots  = S_{k,k-1} \leq \frac{1}{k}$.
\end{lemma}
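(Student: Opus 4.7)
The plan is to split the claim into two parts: the chain of equalities $S_{k,0}=\cdots=S_{k,k-1}$, and the bound $S_{k,0}\le 1/k$. The first part should follow cleanly from the prefix-independence already established (Corollary~\ref{cor:preind}), since the $k$ plays in question are merely cyclic shifts of one another. The second part will use a Fatou-style superadditivity of $\liminf$ applied to a telescoping identity: the numerators of the $k$ ratios defining $S_{k,0},\dots,S_{k,k-1}$ partition the sum $\sum_{j\le n} c_j = d_{n+1}$.

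For the equalities, I would observe that for each $i$ with $0\le i\le k-1$,
\[
(0^i 1 0^{k-i-1})^\omega \;=\; 0^i \cdot (1\,0^{k-1})^\omega,
\]
since deleting the initial $0^i$ from the left-hand side leaves a reward sequence with a $1$ in position $0$ followed by $k-1$ zeros, then another $1$, and so on. Because we are in the regime $l=0$, Corollary~\ref{cor:preind} gives $\phi(0^i\cdot u)=\phi(u)$ for any $u\in\rat^\omega$, so $S_{k,i}=\phi((1\,0^{k-1})^\omega)=S_{k,0}$.

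For the inequality, let $w^{(i)}_j=1$ if $j\equiv i\pmod{k}$ and $0$ otherwise, so that
\[
S_{k,i} \;=\; \li \frac{\sum_{j=0}^n c_j w^{(i)}_j}{d_{n+1}} \;=\; \li \frac{\sum_{\substack{0\le j\le n\\ j\equiv i \pmod k}} c_j}{d_{n+1}}.
\]
The $k$ numerators partition $\{c_0,\dots,c_n\}$, so for every $n$,
\[
\sum_{i=0}^{k-1} \frac{\sum_{\substack{0\le j\le n\\ j\equiv i \pmod k}} c_j}{d_{n+1}} \;=\; \frac{\sum_{j=0}^n c_j}{d_{n+1}} \;=\; \frac{d_{n+1}}{d_{n+1}} \;=\; 1.
\]
I would now invoke the standard superadditivity of liminf for a finite number of sequences, namely $\sum_i \liminf_n a^{(i)}_n \le \liminf_n \sum_i a^{(i)}_n$, valid without any sign restriction. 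Applied to our $k$ sequences, this yields $\sum_{i=0}^{k-1} S_{k,i}\le \liminf_n 1 = 1$, and since all terms are equal, $k\cdot S_{k,0}\le 1$, i.e., $S_{k,0}\le 1/k$.

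I do not expect a real obstacle. The only thing to double-check is the permissibility of the liminf-superadditivity step, which holds for any finite collection of real sequences (and is finite-valued here because $|c_j|\le c$ together with $1/d_{n+1}\to 0$ keeps each $a^{(i)}_n$ bounded); and the cyclic-shift identity, which is a short combinatorial verification.
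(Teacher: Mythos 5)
Your proof of the bound $S_{k,0}\le \frac{1}{k}$ is exactly the paper's argument (superadditivity of $\liminf$ applied to the $k$ ratios, whose sum telescopes to $1$ for every $n$), and it is correct. The equality part, however, has a genuine gap. Corollary~\ref{cor:preind} lets you \emph{replace} the rewards in a finite prefix by other rewards of the \emph{same length}; it does not let you \emph{delete} the prefix. Deleting $0^i$ from $0^i(10^{k-1})^\omega$ shifts the alignment of the reward word against the coefficient sequence $(c_n)$: the payoff $S_{k,i}$ is the relative $\liminf$-density of $(c_n)$ along the residue class $i \bmod k$, and there is no a priori reason these densities coincide. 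Concretely, for the bounded divergent sequence $c_n = 2+(-1)^n$ (positive partial sums, $1/d_n\to 0$, so Corollary~\ref{cor:preind} applies to it) one computes $\phi\left((10)^\omega\right)=\frac{3}{4}$ but $\phi\left(0\,(10)^\omega\right)=\phi\left((01)^\omega\right)=\frac{1}{4}$. So the identity $\phi(0^i u)=\phi(u)$ is not a consequence of prefix-independence; it is essentially the content of the lemma and genuinely requires the memoryless-optimality hypothesis.

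The paper derives the equalities from the monotonicity condition (Lemma~\ref{prefc}), which \emph{is} a consequence of memoryless optimality: assuming w.l.o.g.\ $S_{k,0}\le S_{k,1}$, prefixing both words by the single letter $0$ gives $S_{k,1}\le S_{k,2}$, and iterating yields $S_{k,0}\le S_{k,1}\le\dots\le S_{k,k-1}$; one further application gives $S_{k,k-1}\le \phi\left(0(0^{k-1}1)^\omega\right)$, and only at this point is Corollary~\ref{cor:preind} invoked, legitimately, to replace the leading $0$ by a $1$ (a same-length substitution), giving $\phi\left(0(0^{k-1}1)^\omega\right)=\phi\left((10^{k-1})^\omega\right)=S_{k,0}$ and closing the cyclic chain of inequalities. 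You should repair the first half of your argument along these lines; the second half can stand as written.
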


\begin{proof}
 If $S_{k,0} \leq S_{k,1}$ then by prefixing by the single letter word $0$ and using Lemma~\ref{prefc} 
we conclude that $S_{k,1} \leq S_{k,2}$. 
We continue this process until we get $S_{k,k-2} \leq S_{k,k-1}$. 
After applying this step again we get 
\[
S_{k,k-1} \leq \phi\left(0(0^{k-1}1)^\omega \right) = \phi\left(1(0^{k-1}1)^\omega \right) 
= \phi\left((10^{k-1})^\omega \right) = S_{k,0}.
\] 
Hence, we have $S_{k,0} \leq S_{k,1} \leq \dots  \leq S_{k,k-1} \leq S_{k,0}$.
Thus we have  $S_{k,i}$ is a constant irrespective of the value of $i$. 
A similar argument works in the other case when $S_{k,0} \geq S_{k,1}$.

Using the fact that $\li (a_{1,n} + a_{2,n} + \dots  + a_{k,n}) \geq \li a_{1,n} + \dots  + \li a_{k,n}$, we get that
$S_{k,i} \leq \frac{1}{k}$ for $0 \leq i \leq k-1$.
\qed
\end{proof}

Let $T_{k,i} = -\phi\left( (0^i(-1)0^{k-i-1})^\omega \right)$. By similar 
argument as in the proof of Lemma~\ref{lemm:sk0}, we show that 
$T_{k,0} = T_{k,1} = \dots  = T_{k,k-1} \geq \frac{1}{k}$.

We now show that $(d_n)$ must eventually have always the same sign,
i.e., there exists $n_0$ such that $\sign(d_m) = \sign(d_n)$ for 
all $m,n \geq n_0$. Note that by the assumption of non-zero partial sums,
we have $d_n \neq 0$ for all $n$.


\begin{lemma}
The $d_n$'s eventually have the same sign.
\end{lemma}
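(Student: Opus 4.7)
The plan is to combine the boundedness hypothesis $|c_n|\le c$ with the consequence of Lemma~\ref{lem:sup0} (together with the fact that $\ls d_n=\infty$, noted in the paragraph immediately after the proof of that lemma) that $\lim_{n\to\infty}\tfrac{1}{d_n}=0$. From the latter, $|d_n|\to\infty$, so one can fix $n_0$ such that $|d_n|>c$ whenever $n\ge n_0$. This is the only external input needed; the rest is a one-line magnitude argument.

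For every such $n$, the recursion $d_{n+1}=d_n+c_n$ gives $|d_{n+1}-d_n|=|c_n|\le c<|d_n|$. I would then observe that if $d_n$ and $d_{n+1}$ had opposite signs, the distance between them would be $|d_n|+|d_{n+1}|\ge |d_n|$, contradicting the strict inequality $|d_{n+1}-d_n|<|d_n|$. Hence $\sign(d_{n+1})=\sign(d_n)$, and induction starting at $n_0$ yields $\sign(d_m)=\sign(d_n)$ for all $m,n\ge n_0$, which is exactly the claim.

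The argument is essentially arithmetic and presents no genuine obstacle; the only mild subtlety is to be careful that the \emph{full} limit of $1/d_n$ (not merely its liminf) is zero, which is what lets us pick a single threshold $n_0$ beyond which $|d_n|$ uniformly exceeds $c$. There is no need here to invoke the quantities $S_{k,i}$ or $T_{k,i}$ from Lemma~\ref{lemm:sk0}; those only become relevant in the subsequent lemmas, once eventual constancy of $\sign(d_n)$ has been established and one starts comparing payoffs of rotated periodic words.
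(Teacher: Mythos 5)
Your proof is correct and follows essentially the same route as the paper: fix $n_0$ with $\abs{d_n}>c$ beyond it, and observe that a sign change would force $\abs{c_n}=\abs{d_{n+1}-d_n}>2c$, contradicting boundedness. If anything, you are slightly more careful than the paper in justifying that $\abs{d_n}\to\infty$ (via $\lim_{n\to\infty} 1/d_n=0$ rather than mere unboundedness of $(d_n)$), but the argument is the same.
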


\begin{proof}
Let $c > 0$ be such that $\abs{c_n} < c$ for all $n$. Since $(d_n)$ is unbounded,
there exists $n_0$ such that $\abs{d_n} > c$ for all $n > n_0$ and then if there exists $m > n_0$ 
such that $d_m > 0$ and $d_{m+1} < 0$, we must have $d_m > c$ and $d_{m+1} < -c$.
Thus we have $ c_{m+1} = d_{m+1} - d_m < -2c$, and hence $\abs{c_{m+1}} > 2c$ 
which contradicts the boundedness assumption on $(c_n)$. 
\qed
\end{proof}
If the $d_n$'s are eventually negative then we use the sequence $\{c_n' = -c_n\}$ to obtain the same payoff 
and in this case $d_n = -\si c_i$ will be eventually positive. 
Therefore we assume that there is some $n_0$ such that $d_n > 0$ for all $n>n_0$. 
Let $\beta = max\{\abs{c_0},\abs{c_1},\dots ,\abs{c_{n_0}}\}$. We replace $c_0$ by 1 
and all $c_i$'s with $\beta$ for $1 \leq i \leq n_0$. By corollary \ref{cor:cchange} we observe that
the payoff function will still not change. Hence, we can also assume that $d_n > 0$ for all $n \geq 0$.

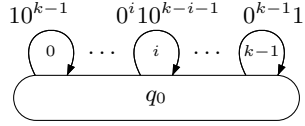
\begin{figure}[t]
\hrule
\begin{center}

\begin{picture}(45,17)(0,0)

\node[NLangle=0, Nw=38, Nh=6](q0)(22,2){$q_0$}

\gasset{loopdiam=6}

\drawloop[ELside=l,loopCW=y, loopangle=90](q0){$\quad 0^i10^{k-i-1}$}
\drawloop[ELside=r,loopCW=y, ELdist=2, loopangle=90](q0){${}_i$}

\node[Nframe=n, Nw=20, Nh=6](q0)(8,2){}

\drawloop[ELside=l,loopCW=y, loopangle=90](q0){$10^{k-1}\quad$}
\drawloop[ELside=r,loopCW=y, ELdist=2, loopangle=90](q0){${}_0$}

\node[Nframe=n, Nw=20, Nh=6](q0)(36,2){}

\drawloop[ELside=l,loopCW=y, loopangle=90](q0){$\quad 0^{k-1}1$}
\drawloop[ELside=r,loopCW=y, ELdist=2, loopangle=90](q0){${}_{k-1}$}

\node[Nframe=n](dots)(15,8){$\dots$}
\node[Nframe=n](dots)(29,8){$\dots$}

\end{picture}
\end{center}
\hrule
\caption{The game $G(k,i)$.}\label{game:gr_k}
\end{figure}

\begin{lemma}\label{lem:all-equal}
We have  $S_{k,i} = \frac{1}{k} = T_{k,i}$ for all $0 \leq i \leq k-1$.
\end{lemma}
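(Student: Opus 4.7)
The plan is to establish the lower bound $S_{k,i} \ge 1/k$ (the upper bound is Lemma~\ref{lemm:sk0}) and symmetrically $T_{k,i} \le 1/k$. The key tool is the one-player game $G(k,i)$ of \figurename~\ref{game:gr_k}, whose state $q_0$ carries $k$ self-loops, the $j$-th being a cycle of length $k$ producing the reward pattern $0^j 1 0^{k-j-1}$.

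Since $\phi$ admits memoryless optimal strategies in all two-player game graphs, it does so in both one-player variants of this graph, namely with $q_0$ owned by player~1 and with $q_0$ owned by player~2. In both variants every memoryless strategy yields payoff $S_k := S_{k,0} = \cdots = S_{k,k-1}$ by Lemma~\ref{lemm:sk0}, so the game value is $S_k$ in each case. Consequently every play in the underlying graph yields payoff exactly $S_k$: the player-1 variant forces every play to have payoff $\le S_k$, while the player-2 variant forces every play to have payoff $\ge S_k$.

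Introducing, for $r \in \{0,\ldots,k-1\}$, the ratios $x_n^r := \frac{1}{d_{n+1}} \sum_{i \le n,\, i \equiv r\,(\mathrm{mod}\,k)} c_i$, the identity $\sum_{r=0}^{k-1} x_n^r = 1$ together with $\li x_n^r = S_k$ for each $r$ (given by the play that always repeats loop $L_r$, which has payoff $S_k$ by the previous paragraph) yields $k\,S_k \le 1$ via superadditivity of $\li$, recovering $S_k \le 1/k$. For the matching lower bound the plan is to exploit the strong constraint ``every play in $G(k,i)$ has payoff $S_k$'', applied to a non-memoryless play whose payoff can be directly identified with $1/k$. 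The natural candidate is the round-robin play $L_0 L_1 \cdots L_{k-1} L_0 \cdots$; in the resulting word the $1$-positions, grouped by residue class modulo $k$, are uniformly spaced $k^2$ apart. Combining this uniform spacing with the boundedness of $(c_n)$, the fact that $d_n \to \infty$, and the key observation that this play must have payoff $S_k$, one forces $S_k \ge 1/k$ and hence $S_k = 1/k$. A cleaner way to make the density argument quantitative is to apply the same key observation to the finer game $G(k^2,j)$ together with Lemma~\ref{lemm:sk0} at level $k^2$, extracting the $k$ relevant residue-class-mod-$k^2$ sums that contribute to the round-robin payoff and showing they sum to $1/k$. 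The dual equality $T_{k,i} = 1/k$ follows by running the entire argument on the game obtained from $G(k,i)$ by negating all edge weights.

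The main obstacle is the last step: translating the constraint ``every play has payoff $S_k$'' into the concrete equality $S_k = 1/k$. Since only boundedness (rather than regularity) of $(c_n)$ is assumed, a delicate Cesaro-style averaging is required, exploiting both the uniform spacing of the sampled $1$-positions and the simultaneous constraints $\li x_n^r = S_k$ for every $r$, in order to rule out pathological cancellations among the residue-class partial sums that would otherwise allow strict inequality $S_k < 1/k$.
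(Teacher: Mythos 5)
There is a genuine gap at exactly the point you flag as ``the main obstacle,'' and it is not a technicality that routine Cesaro averaging will fix. Your preliminary observations are fine: combining Lemma~\ref{lemm:sk0} with the two one-player variants of $G(k,i)$ correctly shows that \emph{every} play in $G(k,i)$ has payoff exactly $S_k$, and the superadditivity argument recovers $S_k\le 1/k$. But the lower bound cannot be extracted from a \emph{fixed} play such as the round-robin $L_0L_1\cdots L_{k-1}L_0\cdots$. The payoff of that play is the $\liminf$ of a $c$-weighted density of its $1$-positions, and for a merely bounded sequence $(c_n)$ a set of natural density $1/k$ need not have $c$-weighted density $1/k$; nothing established so far pins that value down. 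Worse, all the constraints you have collected (from memoryless optimality in both variants, from $G(k^2,j)$, from Lemma~\ref{lemm:sk0} at level $k^2$) are statements about $\liminf$'s of residue-class ratios $x_n^r$. Since the identity $\sum_r x_n^r=1$ combined with superadditivity of $\liminf$ only ever yields inequalities of the form $k\,S_k\le 1$, and you have no control on the corresponding $\limsup$'s, this route reproduces the upper bound $S_k\le 1/k$ but structurally cannot produce $S_k\ge 1/k$. (For $k=2$ one can check explicitly that the four plays $L_0^\omega$, $L_1^\omega$, $(L_0L_1)^\omega$, $(L_1L_0)^\omega$ together give only $S_2\le 1/2$.)

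The missing idea, which is the heart of the paper's proof, is to use an \emph{adaptive} (non-memoryless, non-periodic) play: in each block of $k$ consecutive positions $\{kr,\dots,kr+k-1\}$ choose the loop that places the reward $1$ at the index $i_r$ maximizing $c_{i_r}$ within that block. Since the maximum dominates the average, $c_{i_r}\ge \frac{1}{k}\sum_{i=kr}^{kr+k-1}c_i$, the partial ratios of this play satisfy $t_n\ge \frac{d_n}{k\,d_n}-\frac{c}{d_n}$ pointwise, whence $\liminf_n t_n\ge \frac1k$ with no averaging subtleties. Feeding this single play into the maximizer variant (where every play has payoff at most $S_k$) gives $S_k\ge 1/k$, and the dual construction with $-1$ handles $T_{k,i}$. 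Your write-up never produces a play whose payoff is provably at least $1/k$, so the proof is incomplete.
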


\begin{proof}
Consider the game graph $G(k,i)$ which consists of state $q_0$
in which the player can choose among $k$ cycles of length $k$ 
where in the $i$th cycle, all rewards are $0$ except on the $(i+1)$th edge
which has reward $1$ (see \figurename~\ref{game:gr_k}).

Consider the strategy in state $q_0$ where the player after every $k \cdot r$ steps ($r \geq 0$) 
chooses the cycle which maximizes the contribution for the next $k$ edges. 
Let $i_r$ be the index such that $kr \leq i_r \leq kr + k-1$ and 
$c_{i_r} = \max\{c_{kr}, \dots, c_{kr+k-1}\}$ for $r \geq 0$. 
The payoff for this strategy is $\li t_n$ where 
$t_n = \frac{c_{i_0} + c_{i_1} + \dots + c_{i_{r-1}}}{d_{n}}$ for $i_{r-1} \leq n < i_r$. 

Note that $c_{i_r} \geq \frac{\sum_{i=kr}^{kr+k-1} c_i}{k}$ (the maximum is greater 
than the average), and we get the following (where $c$ is a bound on $(\abs{c_n})_{n \geq 0}$):
\[
\begin{array}{rcl} 
t_{n} & \geq & \displaystyle \frac{\sum_{i=0}^{n-1} c_i}{k \cdot d_{n}} - \frac{c}{d_{n}}, \\[2ex] 
\text{hence } \displaystyle \li t_{n} & \geq & \displaystyle \frac{1}{k} - \li \frac{c}{d_{n}} = \frac{1}{k}.
\end{array}
\]
By Lemma~\ref{lemm:sk0}, the payoff of all memoryless strategies in $G(k,i)$ 
is $S_{k,0}$, and the fact that memoryless optimal strategies exist entails that
$S_{k_0} = \li t_n \geq \frac{1}{k}$, and thus $S_{k,0} = \frac{1}{k} = S_{k,i}$ for all $0 \leq i \leq k-1$.

Using a similar argument on the graph $G(k,i)$ with reward $-1$ instead of $1$, 
we obtain $T_{k,0} = \frac{1}{k} = T_{k,i}$ for all $0 \leq i \leq k-1$.
\qed
\end{proof}

From Lemma~\ref{lem:all-equal}, it follows that
\[
S_{k,i} = \phi((0^i10^{k-i-1})^\omega) = 
\limn \frac{\sum_{r=0}^{\left[\frac{n}{k}\right]} c_{kr+i} }{d_n} = \frac{1}{k}
\] 
and hence,
\[
\begin{array}{rcl}
\phi\left((a_0a_1\dots a_{k-1})^\omega\right) & = & \displaystyle \li \sum_{i=0}^{k-1}  
\left( a_i \cdot \frac{\sum_{r=0}^{\left[\frac{n}{k}\right]} c_{kr+i} }{d_n} \right) 
=\sum_{i=0}^{k-1} \left( a_i \cdot \limn \frac{\sum_{r=0}^{\left[\frac{n}{k}\right]} c_{kr+i} }{d_n} \right) \\[2ex]
& = & \displaystyle \frac{\sum_{i=0}^{k-1} a_i }{k}.
\end{array}
\]

We show that the payoff of a regular word $u = b_1b_2\dots b_m (a_0a_1\dots a_{k-1})^\omega $ 
matches the mean-payoff value. 

\begin{lemma}\label{lemm:final}
If $u := b_1b_2 \dots b_m (a_0a_1 \dots a_{k-1})^\omega $ and $v = (a_0a_1 \dots a_{k-1})^\omega $ are two regular
sequences of weights then $\phi(u) = \phi(v) = \frac{\sum_{i=0}^{k-1} a_i }{k}$.
\end{lemma}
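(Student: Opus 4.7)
The plan is to notice that essentially all the work has already been done by the preceding lemmas and corollaries: to prove the statement I only need to peel off the finite prefix $b_1 b_2 \dots b_m$ from $u$ using prefix-independence, and then quote the computation of $\phi$ on a purely periodic word that appears in the display just before the lemma statement.

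First I would argue that $l = 0$ in the present setting. We are working under the hypotheses of Section~\ref{sec:bounded}, where $\sum \abs{c_i}$ diverges while the sequence $(c_n)$ is bounded; hence $\ls \abs{d_n} = +\infty$, so $l = \li \frac{1}{d_n} = 0$. Lemma~\ref{lem:sup0} then upgrades this to $\ls \frac{1}{d_n} = 0$, i.e.\ the limit $\lim_{n \to \infty} \frac{1}{d_n} = 0$ exists. Corollary~\ref{cor:preind} therefore applies and tells us that $\phi$ is prefix-independent: for any finite prefix $y$ and any infinite suffix $w$ we have $\phi(yw) = \phi(w)$. Applying this with $y = b_1 b_2 \dots b_m$ and $w = (a_0 a_1 \dots a_{k-1})^\omega$ gives $\phi(u) = \phi(v)$.

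Next I would simply quote the identity
\[
\phi\left((a_0 a_1 \dots a_{k-1})^\omega\right) = \frac{\sum_{i=0}^{k-1} a_i}{k}
\]
established in the display just before the lemma. Its derivation relies on Lemma~\ref{lem:all-equal}, which says that for each $0 \leq i \leq k-1$ the quantity $\lim_{n \to \infty} \frac{\sum_{r=0}^{\lfloor n/k \rfloor} c_{kr+i}}{d_n}$ equals $\frac{1}{k}$ as a proper limit. Because each of the $k$ summands is a genuine limit (not merely a liminf), the liminf of the sum coincides with the sum of the limits, and summing the contributions of the $k$ positions within the period yields the stated value.

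Chaining these two facts gives $\phi(u) = \phi(v) = \tfrac{1}{k}\sum_{i=0}^{k-1} a_i$, which is what the lemma asserts. I do not expect any obstacle here; the lemma is really a bookkeeping step that packages together prefix-independence (Corollary~\ref{cor:preind}) and the purely-periodic computation (Lemma~\ref{lem:all-equal}). The only point to double-check is that $l=0$ genuinely holds in the current regime so that Corollary~\ref{cor:preind} may be invoked, which as noted above follows immediately from the divergence of $\sum \abs{c_i}$ together with Lemma~\ref{lem:sup0}.
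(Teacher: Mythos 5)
Your reduction of the lemma to Corollary~\ref{cor:preind} overstates what that corollary says, and the overstatement lands exactly on the point where the real work of the proof lives. Corollary~\ref{cor:preind} asserts $\phi(a_1\dots a_k u)=\phi(0^k u)=\phi(b_1\dots b_k u)$, i.e.\ the payoff is unchanged when a finite prefix is \emph{replaced by another prefix of the same length}; it does not assert $\phi(yw)=\phi(w)$, i.e.\ that a prefix may be \emph{deleted}. Deleting a prefix shifts the suffix against the coefficient sequence $(c_n)$, which is not assumed constant, so $\phi(0^m v)$ and $\phi(v)$ are a priori different liminfs (the $a_j$'s get paired with different $c_i$'s). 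Your prefix-independence step therefore only yields $\phi(u)=\phi(0^m v)$, and the identity $\phi(0^m v)=\phi(v)$ still has to be proved --- that identity is essentially the content of the lemma.

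The paper closes this gap with a monotonicity sandwich: pick $r$ with $kr>m$; if $\phi(v)\le\phi(0v)$, then repeated application of Lemma~\ref{prefc} gives $\phi(v)\le\phi(0^m v)\le\phi(0^{kr}v)$, and $\phi(0^{kr}v)=\phi\left((a_0\dots a_{k-1})^r v\right)=\phi(v)$ by Corollary~\ref{cor:preind} (same-length replacement) together with the observation that $(a_0\dots a_{k-1})^r v$ \emph{is} the word $v$; the case $\phi(v)\ge\phi(0v)$ is symmetric. Alternatively one can avoid monotonicity: writing $m=kq+s$ with $0\le s<k$ and replacing the length-$m$ prefix $0^m$ of $0^m v$ by $a_{k-s}\dots a_{k-1}(a_0\dots a_{k-1})^{q}$ turns the word into a purely periodic word whose period is a cyclic permutation of $a_0\dots a_{k-1}$, to which the display before the lemma assigns payoff $\frac{1}{k}\sum_{i}a_i$. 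Either way, some such argument is required. The remainder of your proposal (deducing $l=0$ and quoting the computation of $\phi(v)$ via Lemma~\ref{lem:all-equal}) is correct.
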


\begin{proof}
 Let $r \in \mathbb{N}$ be such that $kr > m$. If $\phi(v) \leq \phi(0v)$ then using Lemma \ref{prefc} we obtain 
$\phi(0v) \leq \phi(0^2v)$. Applying the lemma again and again, we get, 
$\phi(v) \leq \phi(0^m v) \leq \phi(0^{kr} v)$. From Corollary \ref{cor:preind} we obtain $\phi(0^mv) 
= \phi(b_1b_2 \dots b_m v) = \phi(u)$ and $ \phi(0^{kr} v) = \phi\left( (a_1a_2 \dots a_k)^{r} v\right) = \phi(v)$. Therefore, 
$\phi(u) = \phi(v) = \frac{\sum_{i=0}^{k-1} a_i }{k}$. The same argument goes through for the case 
$\phi(v) \geq \phi(0v)$.
\qed
\end{proof}

\begin{proof}[of Theorem~\ref{thm:mean}] 
In Lemma~\ref{lemm:final} we have shown that the payoff function $\phi$ must match 
the mean-payoff function for regular words, if the sequence $\{c_n\}_{n\geq 0}$
is bounded.
Since memoryless strategies in game graphs result in regular words over weights, 
it follows that the only payoff function that induces memoryless optimal
strategies is the mean-payoff function which concludes the proof.
\qed
\end{proof}

%

Observe that every regular sequence is bounded, and therefore the result of Theorem~\ref{thm:mean}
holds for all weighted average objectives with divergent sum defined by regular sequence
of weights.

\begin{corollary}
Let $(c_n)_{n \in \nat}$ be a regular sequence of real numbers with no zero partial sum such that 
$\sum_{i=0}^\infty \abs{c_i} = \infty$ (the sum is divergent). 
The weighted average payoff function $\phi$ defined by $(c_n)_{n \in \nat}$ 
induces optimal memoryless strategies for all two-player game graphs if and 
only if $\phi$ is the mean-payoff function.
\end{corollary}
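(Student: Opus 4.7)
The plan is almost immediate from Theorem~\ref{thm:mean}. The first observation is that every regular sequence is bounded: if $c_{n+p} = c_n$ for all $n \geq n_0$, then $\{c_n\}_{n \geq 0}$ takes values in the finite set $\{c_0, c_1, \dots, c_{n_0+p-1}\}$, so $\abs{c_n}$ is uniformly bounded by $\max_{0 \leq i < n_0+p} \abs{c_i}$. Together with the hypothesis that $\sum_{i=0}^\infty \abs{c_i}$ diverges, this shows that the hypotheses of Theorem~\ref{thm:mean} are met for the sequence $(c_n)_{n \in \nat}$.

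Applying Theorem~\ref{thm:mean} directly, I would conclude that $\phi$ induces memoryless optimal strategies for all two-player game graphs if and only if $\phi$ coincides with the mean-payoff function on every regular infinite sequence of weights. To upgrade this to the statement of the corollary, I would appeal to the remark following Theorem~\ref{thm:mean}: because the state space $Q$ is finite, every play produced by a memoryless strategy profile is eventually periodic, and hence the induced sequence of rewards is a regular word. Consequently, the value of each game and the sets of optimal strategies under $\phi$ are determined entirely by the restriction of $\phi$ to regular words; agreement there means that $\phi$ is indistinguishable from the mean-payoff function as an objective on every finite game graph.

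The converse direction ($\phi$ the mean-payoff function implies memoryless optimal strategies exist for both players) is the classical result already cited in the paper (see e.g.~\cite{EM79,LigLip69}), and so requires no new argument.

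The only subtlety, and hence the main (quite mild) obstacle, is the careful interpretation of the phrase \emph{$\phi$ is the mean-payoff function}. One must read this in the game-theoretic sense articulated in the remark after Theorem~\ref{thm:mean}: namely, equality of values and coincidence of optimal-strategy sets in every two-player game graph. With this reading, the corollary is simply the specialisation of Theorem~\ref{thm:mean} to the regular case, with the boundedness hypothesis coming for free from eventual periodicity.
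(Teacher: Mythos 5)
Your proposal is correct and follows essentially the same route as the paper: the paper likewise derives the corollary by observing that every regular (eventually periodic) sequence is bounded and then invoking Theorem~\ref{thm:mean}, with the identification of $\phi$ and the mean-payoff function justified exactly as in the remark after that theorem (memoryless strategies yield regular reward words). Your additional care about what ``$\phi$ is the mean-payoff function'' means, and the note that the converse is the classical positional-determinacy result for mean-payoff games, matches the paper's intent.
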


\subsection{Unbounded sequence}
The results of Section~\ref{sec:converging} and Section~\ref{sec:bounded} can be summarized as follows:
(1) if the sum of $c_i$'s is convergent, then the sequence $\{\lambda^i\}_{i\geq 0}$, 
with $\lambda<1$ (discounted sum), is the only class of payoff functions that induce memoryless 
optimal strategies; and
(2) if the sum is divergent but the sequence $(c_n)$ is bounded, then the mean-payoff function 
is the only payoff function with memoryless optimal strategies (and the mean-payoff function 
is defined by the sequence $\{\lambda^i\}_{i\geq 0}$, with $\lambda=1$).
The remaining natural question is that if the sum is divergent and unbounded, 
then is the sequence $\{\lambda^i\}_{i\geq 0}$, with $\lambda>1$, the only
class that has memoryless optimal strategies. 
Below we show with an example that the class $\{\lambda^i\}$, with $\lambda>1$,
need not necessarily have memoryless optimal strategies.

We consider the payoff function given by the sequence $c_n = 2^n$. 
It is easy to verify that the sequence satisfies the partial non-zero assumption.
We show that the payoff function does not result into memoryless optimal 
strategies. To see this, we observe that the payoff for a regular word 
$w = b_0b_1 \dots b_t (a_0 a_1  \dots  a_{k-1})^\omega$ is given by $\min_{0 \leq i \leq k-1} \left( \frac{a_i + 2 a_{i+1} 
+  \dots  + 2^{k-1} a_{i+k-1}}{1+2+ \dots +2^{k-1}} \right) $ i.e., the payoff for a regular word is
the least possible weighted average payoff for its cycle considering all possible cyclic permutations
of its indices (note that the addition in indices is performed modulo $k$).

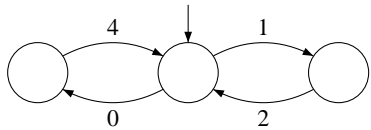
\begin{figure}
\hrule

\begin{center}
\begin{picture}(65,15)(0,0)

\node[Nmarks=i, iangle=90](n0)(35,5){}
\node(n1)(15,5){}
\node(n2)(55,5){}

\drawedge[curvedepth=4](n1,n0){4}
\drawedge[curvedepth=4](n0,n2){1}
\drawedge[curvedepth=4](n0,n1){0}
\drawedge[curvedepth=4](n2,n0){2}

\end{picture}
\end{center}

\hrule
\caption{The game $\mathcal{G}_{1024}$.} 
\label{game:1024}
\end{figure}

Now, consider the game graph $\mathcal{G}_{1024}$ shown in figure \ref{game:1024}. The payoffs for both the memoryless strategies
(choosing the left or the right edge in the start state) are $\min\left(\frac{5}{3},\frac{4}{3}\right)$ and 
$\min{\left(\frac{4}{3},\frac{8}{3}\right)}$
which are both equal to $\frac{4}{3}$. 
Although, if we consider the strategy which alternates between the two edges in the starting 
state then the payoff obtained is $\min{\left(\frac{37}{15},\frac{26}{15},\frac{28}{15},\frac{14}{15} \right)} = \frac{14}{15}$
 which is less than payoff for both the memoryless strategies. Hence, the player who minimizes the payoff
does not have a memoryless optimal strategy in the game $\mathcal{G}_{1024}$.
The example establishes that the sequence $\{2^n\}_{n\geq 0}$ does not induce optimal strategies.
 
\smallskip\noindent{\em Open question.} 
Though weighted average objectives such that the sequence is divergent and 
unbounded may not be of the greatest practical relevance, it is an interesting theoretical
question to characterize the subclass that induce memoryless strategies.
Our counter-example shows that $\{\lambda^n\}_{n \geq 0}$ with $\lambda>1$
is not in this subclass.


\begin{thebibliography}{10}

\bibitem{AHK02}
R.~Alur, T.A. Henzinger, and O.~Kupferman.
\newblock Alternating-time temporal logic.
\newblock {\em Journal of the ACM}, 49:672--713, 2002.

\bibitem{Bojanczyk10}
M.~Boja\'nczyk.
\newblock Beyond omega-regular languages.
\newblock In {\em Proc. of STACS}, LIPIcs 5, pages 11--16. Schloss Dagstuhl -
  Leibniz-Zentrum fuer Informatik, 2010.

\bibitem{CAHS03}
A.~Chakrabarti, L.~de~Alfaro, T.~A. Henzinger, and M.~Stoelinga.
\newblock Resource interfaces.
\newblock In {\em Proc. of EMSOFT}, LNCS 2855, pages 117--133. Springer, 2003.

\bibitem{CD10a}
K.~Chatterjee and L.~Doyen.
\newblock Energy parity games.
\newblock In {\em Proc. of ICALP: Automata, Languages and Programming (Part
  II)}, LNCS 6199, pages 599--610. Springer, 2010.

\bibitem{CDH10b}
K.~Chatterjee, L.~Doyen, and T.~A. Henzinger.
\newblock Quantitative languages.
\newblock {\em ACM Transactions on Computational Logic}, 11(4), 2010.

\bibitem{CHJS10}
K.~Chatterjee, T.~A. Henzinger, B.~Jobstmann, and R.~Singh.
\newblock Measuring and synthesizing systems in probabilistic environments.
\newblock In {\em CAV 10}. Springer, 2010.

\bibitem{Church62}
A.~Church.
\newblock Logic, arithmetic, and automata.
\newblock In {\em Proceedings of the International Congress of Mathematicians},
  pages 23--35. Institut Mittag-Leffler, 1962.

\bibitem{CN06}
T.~Colcombet and D.~Niwi\'nski.
\newblock On the positional determinacy of edge-labeled games.
\newblock {\em Theor. Comput. Sci.}, 352(1-3):190--196, 2006.

\bibitem{dAHM03}
L.~de~Alfaro, T.A. Henzinger, and R.~Majumdar.
\newblock Discounting the future in systems theory.
\newblock In {\em ICALP'03}, LNCS 2719, pages 1022--1037. Springer, 2003.

\bibitem{DrosteG07}
M.~Droste and P.~Gastin.
\newblock Weighted automata and weighted logics.
\newblock {\em Theor. Comput. Sci.}, 380(1-2), 2007.

\bibitem{EM79}
A.~Ehrenfeucht and J.~Mycielski.
\newblock Positional strategies for mean payoff games.
\newblock {\em Int. Journal of Game Theory}, 8(2):109--113, 1979.

\bibitem{FV97}
J.~Filar and K.~Vrieze.
\newblock {\em Competitive {Markov} Decision Processes}.
\newblock Springer-Verlag, 1997.

\bibitem{GimZie04}
H.~Gimbert and W.~Zielonka.
\newblock When can you play positionally?
\newblock In {\em MFCS}, pages 686--697, 2004.

\bibitem{GimZie05}
H.~Gimbert and W.~Zielonka.
\newblock Games where you can play optimally without any memory.
\newblock In {\em CONCUR'05}, pages 428--442. Springer, 2005.

\bibitem{GradelThoWil02}
Erich Gr{\"a}del, Wolfgang Thomas, and Thomas Wilke, editors.
\newblock {\em {A}utomata, {L}ogics, and {I}nfinite {G}ames}, volume LNCS 2500.
\newblock Springer, 2002.

\bibitem{GKK88}
V.A. Gurvich, A.V. Karzanov, and L.G. Khachiyan.
\newblock Cyclic games and an algorithm to find minimax cycle means in directed
  graphs.
\newblock {\em USSR Computational Mathematics and Mathematical Physics},
  28:85--91, 1988.

\bibitem{Henzinger10}
T.~A. Henzinger.
\newblock From boolean to quantitative notions of correctness.
\newblock In {\em Proc. of POPL: Principles of Programming Languages}, pages
  157--158. ACM, 2010.

\bibitem{Kechris}
A.~Kechris.
\newblock {\em Classical Descriptive Set Theory}.
\newblock Springer, 1995.

\bibitem{Kop06}
E.~Kopczy\'nski.
\newblock Half-positional determinacy of infinite games.
\newblock In {\em Proc. of ICALP: Automata, Languages and Programming (2)},
  LNCS 4052, pages 336--347. Springer, 2006.

\bibitem{LigLip69}
T.~A. Liggett and S.~A. Lippman.
\newblock Stochastic games with perfect information and time average payoff.
\newblock {\em Siam Review}, 11:604--607, 1969.

\bibitem{Martin75}
D.A. Martin.
\newblock {Borel} determinacy.
\newblock {\em Annals of Mathematics}, 102(2):363--371, 1975.

\bibitem{MN81}
J.F. Mertens and A.~Neyman.
\newblock Stochastic games.
\newblock {\em International Journal of Game Theory}, 10:53--66, 1981.

\bibitem{Puri95}
A.~Puri.
\newblock {\em Theory of Hybrid Systems and Discrete Event Systems}.
\newblock PhD thesis, University of California, Berkeley, 1995.

\bibitem{Puterman}
M.L. Puterman.
\newblock {\em {Markov} Decision Processes}.
\newblock John Wiley and Sons, 1994.

\bibitem{Sha53}
L.S. Shapley.
\newblock Stochastic games.
\newblock {\em Proc.\ Nat.\ Acad.\ Sci. {USA}}, 39:1095--1100, 1953.

\bibitem{Thomas97}
W.~Thomas.
\newblock Languages, automata, and logic.
\newblock In G.~Rozenberg and A.~Salomaa, editors, {\em Handbook of Formal
  Languages}, volume 3, Beyond Words, chapter~7, pages 389--455. Springer,
  1997.

\bibitem{RV11}
Y.~Velner and A.~Rabinovich.
\newblock Church synthesis problem for noisy input.
\newblock In {\em Proc. of FOSSACS}, LNCS 6604, pages 275--289. Springer, 2011.

\bibitem{ZP96}
U.~Zwick and M.~Paterson.
\newblock The complexity of mean-payoff games on graphs.
\newblock {\em Theoretical Computer Science}, 158:343--359, 1996.

\end{thebibliography}
\end{document}